\def\journal{0}
\journal=1
\journal=0
\DeclareMathOperator{\IC}{IC}
\DeclareMathOperator{\icost}{ICost}
\newcommand{\eps}{\varepsilon}
\newcommand{\PC}{\mbox{\sf PC}}
\newcommand{\SC}{\mbox{\sf SC}}
\newcommand{\LEP}{\mbox{\sf LPCE}}
\newcommand{\good}{\mbox{\rm Good}}
\DeclareMathOperator{\opequal}{\mbox{\sf EQUAL}}
\DeclareMathOperator{\opor}{\mbox{\sf OR}}
\DeclareMathOperator{\opintersect}{\mbox{\sf INTERSECT}}
\newcommand{\kldiv}{D_{\rm KL}}
\newcommand{\pset}[1]{2^{#1}}
\newcommand{\jqed}{\ifnum\journal=1\qed\fi}
\journal=0
\newtheorem{definition}{Definition}
\newtheorem{lemma}[definition]{Lemma}
\newtheorem{fact}[definition]{Fact}
\newtheorem{theorem}[definition]{Theorem}
\newcommand{\qedhere}{}
\newcommand{\knote}[1]{\textcolor{blue}{$\ll$\textsf{#1---Krzysztof}$\gg$\marginpar{$\textcolor{blue}{\Leftarrow}$\iffalse\tiny\bf KO\fi}}}
\newcommand{\vnote}[1]{\textcolor{red}{$\ll$\textsf{#1---Venkat}$\gg$\marginpar{$\textcolor{red}{\Leftarrow}$\iffalse\tiny\bf VG\fi}}}
\newcommand{\descitem}[1]{\item[\quad \sc #1:]}
\begin{document}


\title{Superlinear lower bounds for multipass graph processing\thanks{An extended abstract appeared in the proceedings of the 28th IEEE Conference on Computational Complexity \cite{conf}. Supported in part by NSF CCF-0963975 and the Simons Postdoctoral Fellowship.}}

\ifnum\journal=1

\author{Venkatesan Guruswami \and Krzysztof Onak}

\institute{Venkatesan Guruswami \at
Carnegie Mellon University,
Computer Science Department\\
Gates-Hillman Complex 7211, 
5000 Forbes Avenue, Pittsburgh PA 15213 \\
Tel.: +1-412-268-3041\\
Fax: +1-412-268-5576\\
\email{guruswami@cmu.edu}
\and
Krzysztof Onak \at
IBM T.J.\ Watson Research Center, P.O.\ Box 218, Yorktown Heights, NY 10598\\
Tel.: +1-914-945-1207\\
Fax: +1-914-945-3434\\
\email{krzysztof@onak.pl}
}

\else

\author{{\sf Venkatesan Guruswami}%
\thanks{Supported in part by NSF CCF-0963975.}\\
{\sl Carnegie Mellon University}
\and {\sf Krzysztof Onak}\thanks{Supported by the Simons Postdoctoral Fellowship.}\\{\sl IBM Research}}
\date{February 2016}

\fi

\maketitle

\begin{abstract}
We prove $n^{1+\Omega(1/p)}/p^{O(1)}$ lower bounds for the space complexity of $p$-pass streaming algorithms solving the following problems on $n$-vertex graphs:
\begin{itemize}
\item testing if an undirected graph has a perfect matching (this implies 
lower bounds for computing a maximum matching or even just the maximum matching size),
\item testing if two specific vertices are at distance at most $2(p+1)$ in an undirected graph,
\item testing if there is a directed path from $s$ to $t$ for two specific vertices $s$ and $t$ in a directed graph.
\end{itemize}
The lower bounds hold for $p = O(\log n / \log\log n)$.
Prior to our result, it was known that these problems require $\Omega(n^2)$ space in one pass, but no $n^{1+\Omega(1)}$ lower bound was known for any $p\ge 2$.

These streaming results follow from a communication complexity lower
bound for a communication game in which the players hold two graphs on
the same set of vertices. The task of the players is to find out
whether the sets of vertices at distance exactly $p+1$ from a specific vertex intersect. The game requires a significant amount
of communication only if the players are forced to speak in a specific
difficult order. This is reminiscent of lower bounds for communication
problems such as indexing and pointer chasing. Among other things, our
line of attack requires proving an information cost lower bound for a
decision version of the classic pointer chasing problem and a direct
sum type theorem for the disjunction of several instances of this
problem.
\ifnum\journal=1
\keywords{streaming \and maximum matching \and shortest path \and communication complexity \and information theory}
\fi
\end{abstract}

\section{Introduction}

Graph problems in the streaming model have attracted a fair
amount of attention over the last 15 years. Formally, a streaming
algorithm is presented with a sequence of graph edges and it can read
them one by one in the order in which they appear in the sequence. The
main computational resource studied for this kind of algorithm is the
amount of space it can use, i.e., the amount of information about the
graph the algorithm remembers during its execution.

Due to advances in the storage technology, it is feasible nowadays to collect large amounts of data. Companies store more and more information for reasons that include data mining applications and legal obligations. Sequential access often maximizes readout efficiency in the case of external storage devices. Whenever a single-pass streaming algorithm requires an infeasible amount of main memory, it is natural to ask whether there exists a significantly more efficient algorithm that uses a very small number of passes over data.

At a more theoretical level, relations between nodes (i.e., how they
are connected in the graph and what the distances between them are)
are a fundamental property of graphs that is worth studying. When it
comes to exploring the structure of graphs, allowing for multiple
passes seems to greatly improve the capabilities of streaming
algorithms. For instance, the algorithm of Das Sarma, Gollapudi, and
Panigrahy~\cite{SGP11}, which received the best paper award at PODS 2008, uses
multiple passes to construct long random walks in the graph in order
to approximate PageRank for vertices.  Also many strong lower bounds
of $\Omega(n^2)$ space for one pass easily break if more than one pass
is allowed. This is for instance the case for the early lower bounds
of Henzinger, Raghavan, Rajogopalan~\cite{HRR} and also the lower
bounds of Feigenbaum et al.~\cite{FKMSZ05}.

On the other hand, constructing lower bounds for graph problems is usually based on constructing obstacles for local exploration, and our paper is not different in this respect. We show that finding out if two vertices are at a specific small distance $p$ essentially requires $p/2$ passes to be accomplished in space $O(n)$. The main idea is similar to what is done for pointer chasing. Namely, we place edges in the order opposite to the sequence which enables easy exploration.

\medskip\noindent {\bf Our results.} Let $n$ be the number of vertices in the graph and let $p$ be the allowed number of passes. We show strongly superlinear lower bounds of $T_{n,p}=\Omega \left(\frac{n^{1+1/(2p+2)}}{p^{20} \log^{3/2}n}\right)$ bits of space for three problems:
\begin{itemize}
\vspace{-1ex}
\itemsep=0ex
\item testing if the graph has a perfect matching,
\item testing if two prespecified vertices $u$ and $v$ are at distance at most $2(p+1)$ for an undirected input graph,
\item testing if there is a directed path from $u$ to $v$, where $u$ and $v$ are prespecified vertices and the input graph is directed.
\end{itemize}
In general, lower bounds stronger than $\Omega(n)$ require embedding a
difficult instance of a problem into the ``space of edges'' as opposed
to the ``space of vertices,'' which turns out to be difficult in many
cases. For instance, the $\Omega(n^2)$ lower bounds of~\cite{HRR}
and~\cite{FKMSZ05} do not hold for algorithms that are allowed more
than one pass. 

Communication complexity is a standard tool for proving streaming lower bounds. We describe our hard communication problem from which we reduce to the streaming problems in
Section~\ref{sec:pf-overview}. We now overview related work.

\medskip\noindent {\bf Matchings.} In the maximum matching problem, the goal is to produce a maximum-cardinality set of non-adjacent edges. Streaming algorithms for this problem and its weighted version have received a lot of attention in recent years~\cite{FKMSZ05,McGregor05,EggertKMS12,AhnG11,KonradMM12,GoelKK12,Kapralov13,EpsteinLMS11,Zelke12}.

Our result compares most directly to the lower bound of Feigenbaum et al.~\cite{FKMSZ05}, who show that even checking if a given matching is of maximum size requires $\Omega(n^2)$ space in one pass. Our result can be seen as an extension of their lower bound to the case when multiple passes are allowed. Even when $p\ge 2$ passes are allowed, we show that still a superlinear amount of space, roughly $n^{1+\Omega(1/p)}$, is required to find out if there is a perfect matching in the graph. This of course implies that tasks such as computing a maximum matching or even simply the size of the maximum matching also require this amount of space.

For the approximate version of the maximum matching problem, McGregor~\cite{McGregor05} showed that a $(1-\eps)$-approximation can be computed in $\tilde O(n)$ space\footnote{We use the $\tilde O$ notation to suppress logarithmic factors. For instance, we write $\tilde O(f(n))$ to denote $O\left(f(n) \log^{O(1)}(f(n))\right)$.} with the number of passes that is a function of only $\eps$. The only known superlinear lower bound for the approximate matching size applies only to one-pass algorithms and shows that the required amount of space is $n^{1+\Omega(1 / \log \log n )}$ if a constant approximation factor better than $1-e^{-1}$ is desired \cite{GoelKK12,Kapralov13}.

\medskip \noindent {\bf Shortest paths.} We now move to the problem of computing distances between vertices in an undirected graph. Feigenbaum et al.~\cite{FKMSZ08} show that $\tilde O(n)$ space and one pass suffice to compute an $O(\log n/\log\log n)$-spanner and therefore approximate all distances up to a factor of $O(\log n/\log\log n)$. They also show a closely matching lower bound of $\Omega(n^{1+1/t})$ for computing a factor $t$ approximation to distances between all pairs of vertices.

In the result most closely related to ours, they show that computing the {\em set}\ of vertices at distance $p$ from a prespecified vertex in less than $p/2$ passes requires $n^{1+\Omega(1/p)}$ space. One can improve their lower bound to show that it holds even when the number of allowed passes is $p-1$.
\footnote{This follows by replacing one of their proof components with a stronger pointer chasing result from \cite{GuhaM09}.}
As a result, to compute the distance $p$ neighborhood in $O(n)$ space, essentially the best thing one can do is to simulate the BFS exploration with one step per pass over the input, which requires $p$ passes. In this paper, we show a similar lower bound for the problem of just checking if two specific vertices are at distance exactly $p$. Our problem is algorithmically easier. If two vertices are at distance $p$, $\lceil p/2\rceil$ passes and $O(n)$ space suffice, because one can simulate the BFS algorithm up to the radius of $\lceil p/2 \rceil$ from both vertices of interest. This is one of the reasons why our result cannot be shown directly by applying their lower bound.

A space lower bound of $\Omega(n^2)$ for one pass algorithms to find whether a pair of nodes is at distance $3$ can be found in \cite{FKMSZ05}.

\medskip\noindent {\bf Directed connectivity.} Feigenbaum et al.~\cite{FKMSZ05} show that the directed $u$-$v$ connectivity problem requires $\Omega(n^2)$ bits of space to solve in one pass. However, their lower bound does not extend to more than one pass. Once again our lower bound extends their result to show that a superlinear lower bound holds for multiple passes. (Note that for {\em undirected} graphs, the problem of connectivity can easily be solved with one pass and $\tilde O(n)$ space, using for instance the well known union-find algorithm.)

\medskip\noindent {\bf Optimality of our lower bound.} We conjecture that our lower bounds can be improved from $\tilde\Omega(n^{1+1/(2p+2)})$ to $\tilde\Omega(n^{1+1/(p+1)})$ for $p=O(1)$. For the matching problem, our lower bound is based on showing that finding a single augmenting path is difficult. It is an interesting question if a stronger lower bound can be proved in the case where more augmenting paths have to be found. Currently, no $o(n^2)$-space streaming algorithm is known for this problem with a small number of passes.

\medskip\noindent{\bf Paper organization.} We begin in Section \ref{sec:pf-overview} with a description of the communication problems we study and a high-level overview of our lower bound approach. We set up some useful information-theoretic preliminaries in Section \ref{sec:prelims}. We state our main communication complexity lower bound (Theorem \ref{thm:main}) and use it to show our streaming lower bounds in Section \ref{sec:reductions}. Our communication lower bound is proved in three steps, and we go into the details of these steps in the next three sections. Finally, in Section \ref{sec:main} we put the steps together to give a proof of Theorem \ref{thm:main}.

\section{Proof overview and techniques}\label{sec:pf-overview}

Via simple reductions, our multipass streaming lower bounds for
matching and connectivity reduce to proving communication complexity
lower bounds for a certain decision version of ``set pointer chasing.'' The reductions to streaming are described in Section \ref{sec:reductions}. In the current section we give an  overview of our communication complexity results.  We start with a quick review of information and communication complexity and then introduce communication problems that are useful in our proof.

We assume private randomness in all communication problems, unless otherwise stated. Furthermore, all messages are public, i.e., can be seen by all players (the setting sometimes described as the blackboard model).

\subsection{Information and Communication Complexity}

Communication complexity and information complexity play an important role in our proofs. We now provide necessary definitions for completeness.

The \emph{communication complexity} of a protocol is the function from the input size to the maximum length of messages generated by the protocol on an input of a specific size.
For a problem $\mathcal X$ and $\delta \in [0,1]$, the communication complexity of $\mathcal X$ with error $\delta$ is the function from the input size to the infimum communication complexity of private-randomness protocols that err with probability at most $\delta$ on any input. We write $R_\delta(\mathcal X)$ to denote this quantity.

The \emph{information cost}\footnote{Note that this is the {\em external} information cost following the terminology of \cite{BBCR10}. For product distributions $\psi$, this also equals the {\em internal} information cost. As product distributions will be our exclusive focus in this paper, this distinction is not relevant to us, and we will simply use the term information cost.} $\icost_\psi(\Pi)$ of a protocol $\Pi$ on input distribution $\psi$ equals the mutual information $I(X;\Pi(X))$, where $X$ is the input distributed according to $\psi$ and $\Pi(X)$ is the transcript of $\Pi$ on input $X$.

The \emph{information complexity} $\IC_{\psi,\delta}$ of a problem $\mathcal X$ on a distribution $\psi$ with error $\delta$ is the infimum $\icost_\psi(\Pi)$ taken over all private-randomness protocols $\Pi$ that err with probability at most $\delta$ for any input.

\subsection{Communication Problems}

Consider a communication problem with $p$ players $P_1$, \ldots, $P_p$. Players speak in $r$ rounds and in each round they speak in order $P_1$ through $P_p$. At the end of the last round, the player $P_p$ has to output the solution. We call any such problem a \emph{$(p,r)$-communication problem}. 

We define $[n]$ as $\{1,\ldots,n\}$. For any set $A$, we write $\pset{A}$ to denote the power set of $A$, i.e., the set of all subsets of $A$. For any function $f:A \to \pset{B}$, we define a mapping $\vv{f}:\pset{A}\to\pset{B}$ such that $\vv{f}(S) = \bigcup_{s\in S} f(s)$.

\medskip \noindent {\bf Pointer and Set Chasing.}
The \emph{pointer chasing} communication problem $\PC_{n,p}$, where $n$ and $p$ are positive integers, is a $(p,p-1)$-communication problem in which the $i$-th player $P_i$ has a function $f_i:[n]\to[n]$ and the goal is to compute $f_1(f_2(\ldots f_p(1) \ldots))$.
The complexity of different versions of this problem was explored thoroughly by a number of works \cite{PS84,DGS87,NisanW93,DJS98,PRV01,CCM08,GuhaM08,GuhaM09}.

The \emph{set chasing} communication problem $\SC_{n,p}$, for given positive integers $n$ and $p$, is a $(p,p-1)$-communication problem in which the $i$-th player $P_i$ has a function $f_i:[n]\to\pset{[n]}$ and the goal is to compute $\vv{f_1}(\vv{f_2}(\ldots \vv{f_p}(\{1\}) \ldots))$. A two-player version of the problem was considered by Feigenbaum et al.~\cite{FKMSZ08}.

\medskip\noindent {\bf Operators on Problems.} For a $(p,r)$-communication problem $\mathcal X$,
we write $\opequal(\mathcal X)$ to denote a $(2p,r)$-communication problem in which the first $p$ players $P_1$, \ldots, $P_p$ hold one instance of $\mathcal X$, the next $p$ players $P_{p+1}$, \ldots, $P_{2p}$ hold another instance of $\mathcal X$, and the goal is to output one bit that equals $1$ if and only if the outputs for the instances of $\mathcal X$ are equal. See Figure~\ref{fig:pc_eq} for an example.

\begin{figure*}[th]
\begin{center}	
\includegraphics[scale=.45]{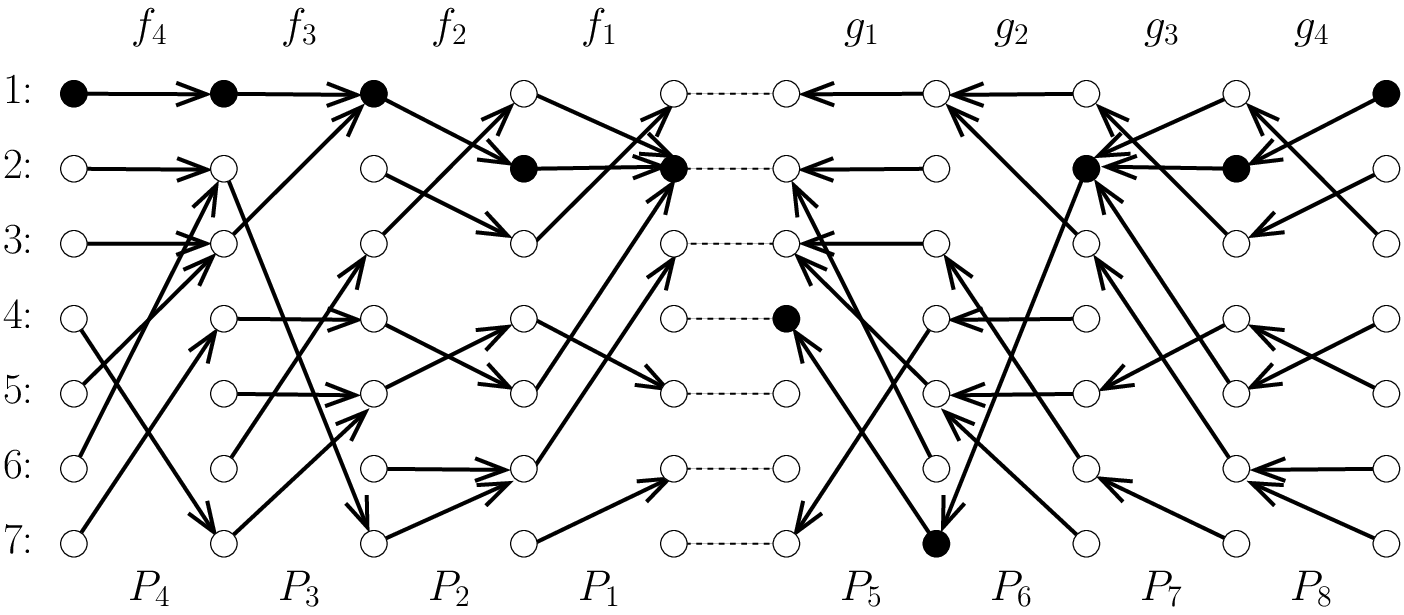}
\end{center}
\caption{A sample instance of $\opequal(\PC_{7,4})$ with a negative solution. It consists of two instances of $\PC_{7,4}$ held by two different sets of players.}
\label{fig:pc_eq}
\end{figure*}

Analogously, for a $(p,r)$-communication problem $\mathcal X$ such that the output is a set, we write $\opintersect(\mathcal X)$ to denote the $(2p,r)$-communication problem in which the first $p$ players $P_1$, \ldots, $P_p$ hold one instance of $\mathcal X$, the next $p$ players $P_{p+1}$, \ldots, $P_{2p}$ hold another instance of $\mathcal X$, and the goal is to output one bit that equals $1$ if and only if the sets that are solutions to the instances of $\mathcal X$ intersect. See Figure~\ref{fig:set_intersect} for an example.

\begin{figure*}[th]
\begin{center}
\includegraphics[scale=.45]{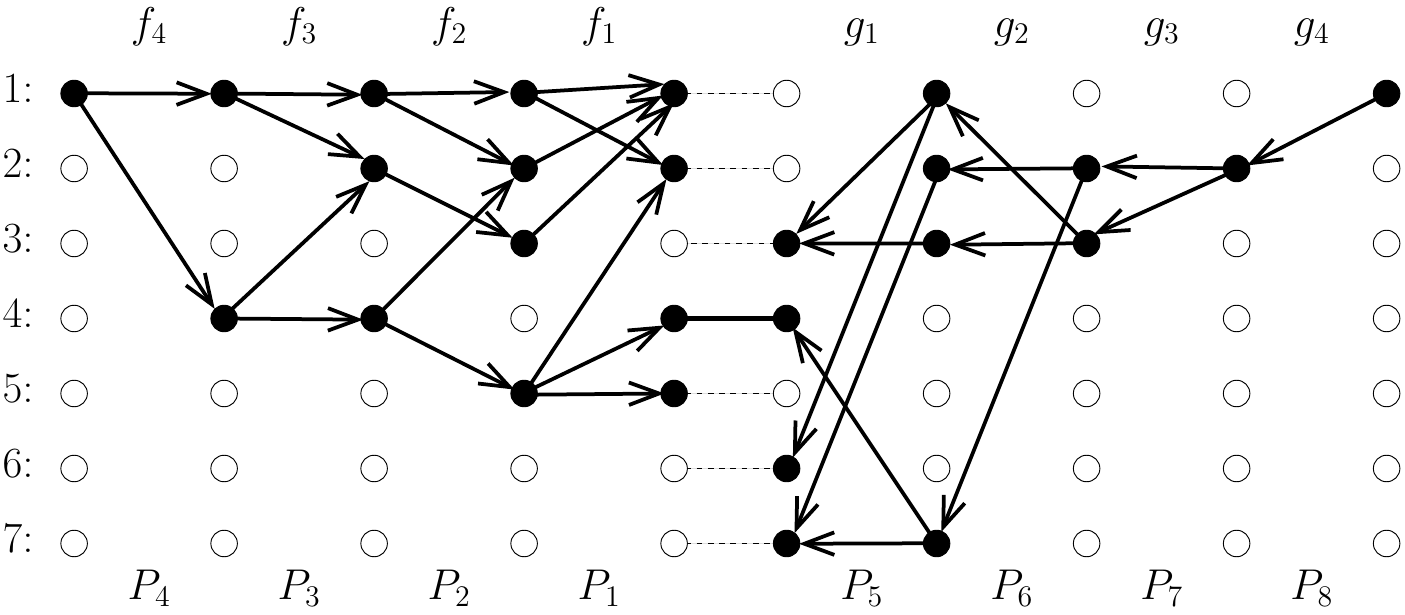}
\end{center}
\caption{A sample instance of $\opintersect(\SC_{7,4})$, where two final sets intersect. The edges outgoing from vertices that are not visited were omitted.}
\label{fig:set_intersect}
\end{figure*}

For a $(p,r)$-communication problem $\mathcal X$ with a Boolean output, we write $\opor_t(\mathcal X)$, where $t$ is a positive integer, to denote the $(p,r)$-communication problem in which players have $t$ instances of $\mathcal X$ and want to output the disjunction of their results.

\medskip\noindent {\bf Limited Pointer Chasing Equality.}
We say that a function $f:A \to B$ is \emph{$t$-colliding}, where $t$ is a positive integer, if there is an $A'\subseteq A$ of size $t$ and a $b \in B$ such that for all $a \in A'$, $f(a) = b$. 

We write ${\LEP_{n,k,t}}$ to denote a modified version of $\opequal(\PC_{n,k})$.
In ${\LEP_{n,k,t}}$ the last player has to output the same value as in $\opequal(\PC_{n,k})$, unless one of the functions in one of the pointer chasing instances is $t$-colliding, in which case the last player has to output $1$.
This is a technical extension to ensure that no element has too many pre-images, which is necessary to make one of our reductions work.

\subsection{Lower bound for $\opintersect(\SC_{n,p})$}
\label{sec:overview}

Our multipass streaming lower bounds for
matching and connectivity reduce to proving a communication complexity
lower bound for the set chasing intersection problem $\opintersect(\SC_{n,p})$.
Note that if the players spoke in the order $P_{2p}$, $P_{2p-1}$, \dots, $P_1$,
then they would be able to solve both instances of $\SC_{n,p}$, using at most $O(n)$ communication per player, which is enough to solve the intersection problem.
If the players spoke in the desired order $P_1$, $P_2$, \dots, $P_{2p}$ but were
allowed a total of $p$ rounds then they would be able to solve the instances of $\SC_{n,p}$
with $O(n)$ communication per message by simulating one step in the pointer chasing instance per round. Our main result is however that if the number of allowed rounds is $p-1$, then approximately $n^{1+\Omega(1/p)}$ bits of
communication are needed to solve the problem, even for randomized
protocols with constant error.

Our result is reminiscent of the classic $\Omega(n)$
communication complexity lower bounds for problems such as indexing
and pointer chasing $\PC_{n,p}$~\cite{NisanW93,GuhaM09} when the players speak in the ``wrong'' order. Guha and McGregor~\cite{GuhaM09} adapt the proof of Nisan and Wigderson~\cite{NisanW93} to show that solving $\PC_{n,p}$ (in $p-1$ passes) requires 
$\Omega(n)/p^{O(1)}$ total communication even if the protocol can be randomized and can err with small constant probability. Increasing the number of rounds to $p$ or letting the players speak in the opposite order (even in just one round) would result in a problem easily solvable with messages of length $O(\log n)$.

Even more directly related is the construction of Feigenbaum et al.~\cite{FKMSZ08}, who show that solving $\SC_{n,p}$ requires $n^{1+\Omega(1/p)}$ communication in less then $p$ passes\footnote{In fact, they show this for roughly less than $p/2$ passes, but replacing the lower bound of \cite{NisanW93} with the lower bound of \cite{GuhaM09} and extending some other complexity results to the setting with multiple players yields the improved bound claimed here.}. Their proof follows by using a direct sum theorem of Jain, Radhakrishnan and Sen~\cite{JRS} to show that solving $t\approx n^{\Theta(1/p)}$ instances of $\PC_{n,p}$ requires roughly $t$ times more communication than solving a single instance. Then they show that an efficient protocol for solving $\SC_{n,p}$ would result in an efficient protocol for solving $t$ instances of $\PC_{n,p}$ in parallel. 

Compared to $\SC_{n,p}$, $\opintersect(\SC_{n,p})$ is a decision problem. In particular, there seems to be no reduction allowing one to reconstruct the sets reached in $\opintersect(\SC_{n,p})$. The only thing that we learn after an execution of the protocol is whether these two sets intersect. Therefore, reducing our question to that of~\cite{FKMSZ08} seems unlikely.

Our proof of the above communication complexity lower bound proceeds
in three steps:
\begin{description}
\descitem{Step A} Reduction to proving a communication lower bound for $\opor_t(\opequal(\PC_{n,p}))$.
\descitem{Step B} A direct sum style step lower bounding the
  communication complexity of $\opor_t(\opequal(\PC_{n,p}))$ as roughly
    $t$ times the information complexity of
    $\opequal(\PC_{n,p})$.
\descitem{Step C} An information complexity lower bound for $\opequal(\PC_{n,p})$.
\end{description}
The technical body of the paper actually proves these steps in the
opposite order (Steps A, B, and C are discussed in Sections \ref{sec:step3}, \ref{sec:step2}, and \ref{sec:step1}, respectively). But here we will expand on the steps in the above order. The actual proof works with a variant of $\opequal(\PC_{n,p})$, namely $\LEP_{n,p,r}$, which we defined earlier, in order to deal with
functions $f_i$ that may be highly colliding, and which may break the reduction in Step~A. For simplicity, we ignore this aspect in the overview, but it is worth keeping in mind that this complicates the execution of Step C on the information complexity lower bound. 

\medskip \noindent {\bf Step A: Reduction to proving a lower bound for
  $\opor_t(\opequal(\PC_{n,p}))$.}  Our idea here is to use a
  communication protocol for $\opintersect(\SC_{n,p})$ to give a
    protocol that can answer if at least one of the $t$ instances of
    $\opequal(\PC_{n,p})$ has a Yes answer, where $t = n^{\Theta(1/p)}$. (Recall that in the
    $\opequal(\PC_{n,p})$ problem, the input consists of two instances
    of $\PC_{n,p}$ with functions $\{f_i,g_i : [n] \to [n]\}_{i=1}^p$
    and the goal is to output Yes iff we end up at the same index in
    both instances, i.e., if $f_1(f_2(\ldots f_p(1) \ldots)) =
    g_1(g_2(\ldots g_p(1) \ldots))$.) Given $t$ instances of
    $\opequal(\PC_{n,p})$, for each instance independently, we
    randomly scramble the connections in every layer while preserving
    the answer to $\opequal(\PC_{n,p})$. We then overlay all these
    instances on top of each other to construct an instance of
    $\opintersect(\SC_{n,p})$ (note that each node has exactly $t$
      neighbors in the next layer).  

By construction, given a Yes
      instance of $\opor_t(\opequal(\PC_{n,p}))$, by following the
      mappings from the instance of $\opequal(\PC_{n,p})$ which has a
      Yes answer, we also obtain an element that belongs to the
      intersection of the two resulting sets in
      $\opintersect(\SC_{n,p})$. Since $t = n^{\Theta(1/p)}$, we have $t^{2p} \ll n$, and we argue that the random scramblings ensure that if none of $t$ instances of $\opequal(\PC_{n,p})$
      have a Yes answer, then it is unlikely that the two resulting
      sets in the instance of $\opintersect(\SC_{n,p})$ will
      intersect. This constraint on $t$ is what limits our lower bound to ${\sim}\,n^{1+1/(2p)}$.

\medskip \noindent {\bf Step B: A direct sum style argument.} In this
step, our goal is to argue that the randomized communication
complexity of $\opor_t(\opequal(\PC_{n,p}))$ is asymptotically
$\Omega(t)$ times larger than that of $\opequal(\PC_{n,p})$. This is
reminiscent of direct sum results of the flavor that computing answers
to $t$ instances of a problem requires asymptotically $t$ times the
resources, but here we only have to compute the OR of $t$ instances.
Our approach is to use the information complexity method that has emerged in the last decade as a potent tool to tackle such direct sum like
questions~\cite{CSWY01,BJKS04,JRS}, and more recently in \cite{BBCR10,BravermanR11} and follow-up works. The introductions of \cite{BBCR10,JPY12} contain more detailed information and references on direct sum and direct product theorems in communication complexity.

Our hard distribution will be the
uniform distribution over all inputs. Being a product distribution,
the information complexity will be at least the sum of the mutual
information between the $i$-th input and the transcript, for $1\le i
\le t$. Using the fact that the probability of a Yes answer on a
random instance of $\opequal(\PC_{n,p})$ is very small (at most
$O(1/n)$), we prove that the mutual information between the $i$-th
input and the transcript cannot be much smaller than the information
cost of $\opequal(\PC_{n,p})$ for protocols that err with probability $o(1/n)$ under the uniform distribution.

\medskip \noindent {\bf Step C: Lower bound for information cost of
  $\opequal(\PC_{n,p})$.} This leaves us with the task of lower
bounding the information cost of low error protocols for
$\opequal(\PC_{n,p})$ under the uniform distribution. This is the most
technical of the three steps. We divide this step into two parts.

First we show that if there were a protocol with low information cost
$\IC$ on the uniform distribution, then there would exist a deterministic
protocol that on the uniform distribution would send mostly short messages
and err with at most twice the probability. This is done by adapting
the proof of the message compression result of \cite{JRS} for bounded
round communication protocols. We cannot use their result as such
since in order to limit the increase in error probability to $\gamma$,
the protocol needs to communicate $\Omega(1/\gamma^2)$ bits. This is
prohibitive for us as we need to keep the error probability as small
as $O(1/n)$, and can thus only afford an additive $O(1/n)$
increase. We present a twist to the simulation obtaining a
deterministic protocol with at most twice the original error
probability. The protocol may send a long message with some
small probability $\eps$ and in other cases communicates at most
$O(\IC/\eps^2)$ bits. In our application, we set $\eps$ to be a
polynomial in $1/p$.

The second part is a lower bound for $\opequal(\PC_{n,p})$ against
such ``typically concise'' deterministic protocols.  To prove this, we
show that if the messages in the deterministic protocol are too short,
then with probability at least $1/2$, the protocol will have little
knowledge about whether the solutions to two instances of pointer
chasing are identical and therefore, will still err with probability
$\Omega(1/n)$, which is significant from our point of view.
The proof extends the lower bound
for pointer chasing due to Nisan and Wigderson \cite{NisanW93} and its
adaptation due to Guha and McGregor~\cite{GuhaM09}. We have to
overcome some technical hurdles as we need a lower bound for the
equality checking version and not for the harder problem of computing
the pointer's value. Further, we need to show that a {\em constant}
fraction of the protocol leaves are highly uncertain about their
estimate of the pointers' values, so that they would err
with probability $\Omega(1/n)$ (with $1/n$ being the collision
probability for completely random and independent values).

Summarizing, Step~C can be seen as a modification of techniques of \cite{NisanW93,GuhaM09} to prove a communication lower bound for $\opequal(\PC_{n,p})$ combined with techniques borrowed from \cite{JRS} to imply a lower bound for information complexity. The relationship between information complexity and communication complexity has been a topic of several papers, starting with \cite{CSWY01,JRS} for protocols with few rounds, and more recently \cite{BBCR10,BravermanR11,Braverman12,ChakrabartiKW12,BravermanW12,KLLRX12} for general protocols.

\section{Preliminaries}
\label{sec:prelims}
\noindent {\bf Constant $C_\star$.} Let $C_\star$ be a constant such that the probability that a function $f:[n] \to [n]$ selected uniformly at random is $C_\star (1+\log n)$-colliding is bounded by $1/(2n^2)$.  The existence of $C_\star$ follows from a combination of the Chernoff and union bounds.

\subsection{Useful information-theoretic lemmas}
\label{subsec:info-theory}
Let us first recall a result that says that if a random variable has large entropy, then it behaves almost like the uniform random variable on large sets.
\begin{fact}[{\cite{RazW89}, see also \cite[Lemma 2.10]{NisanW93}}]\label{fact:almost_uniform}
Let $X$ be a random variable on $[n]$ with $H(X) \ge \log n - \delta$. Let $S\subseteq[n]$ and let $\Delta = \sqrt{\frac{4\delta n}{|S|}}$. If $\Delta  \le 1/10$, then
$\Pr[X \in S] \ge \frac{|S|}{n} \left(1 - \Delta\right).$
\end{fact}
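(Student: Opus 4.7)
Let $s = |S|$, $q = s/n$, and $p = \Pr[X \in S]$, and write $D(p \| q)$ for the KL divergence between $\mathrm{Bern}(p)$ and $\mathrm{Bern}(q)$. The target is $p \ge q(1 - \Delta)$ with $\Delta = 2\sqrt{\delta/q}$. The approach is a standard entropy decomposition followed by a refined, Pinsker-style lower bound on the Bernoulli KL divergence.

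The first step is to condition on the indicator $\mathbf{1}_{X \in S}$: using that conditional on $X \in S$ (respectively $X \notin S$) the random variable $X$ has support of size $s$ (respectively $n-s$), I would obtain $H(X) \le H(p) + p\log s + (1-p)\log(n-s)$. Plugging in $H(X) \ge \log n - \delta$ and cancelling a $\log n$ on both sides rearranges to $\delta \ge D(p \| q)$.

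The second step converts this into a lower bound on $p$. If $p \ge q$ we are done, so assume $p < q$. I would prove the sharper inequality $D(p \| q) \ge (q-p)^2 / (2q)$ (in nats) by noting that $\psi(t) := D(t\|q)$ satisfies $\psi(q) = 0$, $\psi'(q) = 0$, and $\psi''(t) = 1/(t(1-t))$, with $1/(t(1-t)) \ge 1/t \ge 1/q$ on the interval $[p, q]$; double-integrating then gives $\psi(p) \ge (q-p)^2 / (2q)$. Translating to bits picks up a factor of $\ln 2 < 1$, and combining with the bound $\delta \ge D(p \| q)$ from the previous step yields $q - p \le \sqrt{2q\delta \ln 2} \le \sqrt{2q\delta} \le 2\sqrt{q\delta} = q\Delta$, which rearranges to the desired conclusion.

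The only nontrivial ingredient is the curvature-based KL bound in the second step. The textbook Pinsker inequality $D(p\|q) \ge 2(p-q)^2$ is not tight enough here, since it misses the $\sqrt{q}$ scaling that is essential in the regime $|S| \ll n$ where the statement is interesting. The improvement comes for free from observing that on $[p, q]$ the Bernoulli KL's curvature is at least $1/q$, not merely the global minimum $4$ attained at $t = 1/2$. The hypothesis $\Delta \le 1/10$ does not enter the argument explicitly; it is mild slack ensuring that $1 - \Delta > 0$ so the stated lower bound on $\Pr[X \in S]$ is meaningful.
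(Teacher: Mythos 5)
The paper states this as a cited Fact (from Raz--Wigderson and Nisan--Wigderson) and gives no proof of its own, so there is no in-paper argument to compare against. Your proof is correct. Step 1 is the standard chain-rule decomposition on the indicator $\mathbf{1}_{X\in S}$, which (in KL form) is exactly data processing applied to $D(\mathrm{law}(X)\,\|\,\mathrm{Unif}[n]) = \log n - H(X) \le \delta$, yielding $D(p\|q) \le \delta$ for the induced Bernoulli pair. Step 2 is a curvature-refined Pinsker bound: on $[p,q]$ with $p<q$ the second derivative $\psi''(t) = 1/(t(1-t))$ is at least $1/q$, giving $D(p\|q) \ge (q-p)^2/(2q)$ in nats and hence $q - p \le \sqrt{2q\delta\ln 2} \le 2\sqrt{q\delta} = q\Delta$. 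All the numerics check out ($\sqrt{2\ln 2} < 2$), and your remark that the hypothesis $\Delta \le 1/10$ is not needed for the bound itself (only for it to be nonvacuous) is accurate; the paper's constant $4$ in $\Delta = \sqrt{4\delta n/|S|}$ simply absorbs the $2\ln 2$ with room to spare. This is a clean, self-contained route to a lemma the paper takes as a black box.
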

\noindent Using the above result, we show that it is hard to guess correctly with probability $1-o(1/n)$ if two independent random variables distributed on $[n]$ collide if they have large entropy.
\begin{lemma}\label{lem:collision}
Let $X$ and $Y$ be two independent random variables distributed on $[n]$ such that both
$H(X)$ and $H(Y)$ are at least $\log n - \delta$, where $\delta = 48^{-2}$. Then
\begin{itemize}
\itemsep=0ex
 \item $\Pr[X=Y] \ge 1/(8n)$, and 
 \item if $n\ge 4$, $\Pr[X\ne Y] \ge 1/4$.
\end{itemize}
\end{lemma}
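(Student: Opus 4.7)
The plan is to handle the two bounds separately. For the first bound $\Pr[X=Y] \ge 1/(8n)$, I would define the ``heavy'' sets $A = \{i : \Pr[X=i] \ge 1/(2n)\}$ and $B = \{i : \Pr[Y=i] \ge 1/(2n)\}$, show that both cover almost all of $[n]$, and then lower-bound $\Pr[X=Y]$ by the contribution from $A \cap B$. Concretely, to show $|A| > (1-400\delta)n$, I would argue by contradiction: if $|A^c| \ge 400 \delta n$ then $\sqrt{4\delta n / |A^c|} \le 1/10$, so Fact~\ref{fact:almost_uniform} applies and yields $\Pr[X \in A^c] \ge (9/10) |A^c|/n$, contradicting the definitional bound $\Pr[X \in A^c] \le |A^c|/(2n)$. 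The symmetric argument handles $B$, and with $\delta = 1/48^2$ we have $800\delta < 1/2$, so $|A \cap B| \ge |A| + |B| - n > n/2$. Summing $\Pr[X=i]\Pr[Y=i]$ over $i \in A \cap B$ then gives $\Pr[X=Y] \ge |A\cap B|/(4n^2) > 1/(8n)$.

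For the second bound, I would use the crude inequality $\Pr[X=Y] = \sum_i \Pr[X=i]\Pr[Y=i] \le \max_i \Pr[X=i]$ and reduce to proving $p^* := \max_i \Pr[X=i] < 3/4$ when $n \ge 4$. Maximizing the conditional entropy of $X$ given $X \ne \arg\max$ by spreading the remaining mass uniformly over $[n]\setminus\{\arg\max\}$ gives the standard bound $H(X) \le -p^*\log p^* - (1-p^*)\log(1-p^*) + (1-p^*)\log(n-1)$. A one-line derivative computation shows the right-hand side is decreasing in $p^*$ on $[1/n,1]$, so $p^* \ge 3/4$ would force $H(X)$ to be at most its value at $p^* = 3/4$, which is strictly less than $\log n - \delta$ for every $n \ge 4$ (a single numerical check at $n=4$ suffices, since the gap $\log n - (1/4)\log(n-1)$ is increasing in $n$). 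Hence $p^* < 3/4$ and $\Pr[X \ne Y] > 1/4$.

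The main technical delicacy is tracking the constants in part (i) so that $|A \cap B|$ actually exceeds $n/2$ (not merely nonempty); the $\Delta \le 1/10$ slack in the hypothesis of Fact~\ref{fact:almost_uniform} costs a factor of about $400$ in the bound $|A^c| < 400\delta n$, after which one needs $800\delta < 1/2$, which is exactly what dictates the choice $\delta = 1/48^2$. Part (ii) is routine apart from the one numerical check at $n=4$, where $\log n - \delta \approx 2$ comfortably beats the competing quantity $H_2(3/4) + (1/4)\log 3 \approx 1.21$.
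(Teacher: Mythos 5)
Your argument for the first bullet is correct and is essentially the paper's own proof: the paper likewise defines the set of ``heavy'' points with mass at least $1/(2n)$, rules out a large light set by applying Fact~\ref{fact:almost_uniform} to it (getting $\Delta\le 1/12$ from a light set of size $n/4$, versus your $\Delta\le 1/10$ from size $400\delta n$ --- both valid), intersects the two heavy sets to get at least $n/2$ common points, and sums $1/(4n^2)$ over them. For the second bullet you genuinely diverge. The paper stays inside the heavy-set machinery: having shown $|S_Y|\ge \tfrac34 n$ with each point of mass at least $1/(2n)$, it notes that for any realization $x$ of $X$ the set $S_Y\setminus\{x\}$ still has at least $n/2$ elements, so $\Pr[Y\ne x]\ge (n/2)\cdot 1/(2n)=1/4$; this is a two-line consequence of what part (i) already established. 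Your route instead bounds $\Pr[X=Y]\le\max_i\Pr[X=i]$ and excludes $\max_i\Pr[X=i]\ge 3/4$ via the Fano-type inequality $H(X)\le H_2(p^*)+(1-p^*)\log(n-1)$, monotonicity in $p^*$, and a numerical check at $n=4$; this is self-contained (it does not need the heavy sets at all) and even gives a strict inequality, at the cost of a derivative computation and the numerics. Both arguments are sound; the paper's is shorter given part (i), yours is more modular.
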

\begin{proof}
We first prove that there is a set $S_X\subseteq[n]$ such that $|S_X|\ge\frac{3}{4}n$ and for each $x\in S_X$, $\Pr[X=x] \ge 1/(2n)$. Suppose that there is no such set. Then there is a set $T_X$ of size more than $n/4$ in which every element has probability strictly less than $1/(2n)$, and therefore, $\Pr[X \in T_X] < \frac{|T_X|}{2n}$.
Note that $\sqrt{\frac{4\delta n}{|T_X|}} \le 1/12 < 1/10$, which implies that we can apply Fact~\ref{fact:almost_uniform} to $T_X$. We obtain
$\Pr[X \in T_X] \ge \frac{11}{12}\cdot\frac{|T_X|}{n}$, which contradicts the size of $T_X$ and implies that $S_X$ with the desired properties does exist.

Analogously, one can prove that there is a set $S_Y\subseteq[n]$ such that $|S_Y|\ge\frac{3}{4}n$ and for each $y\in S_Y$, $\Pr[Y=y] \ge 1/(2n)$. Note that $|S_X \cap S_Y|\ge n/2$. For each $x \in S_X \cap S_Y$, $\Pr[X=Y=x] \ge 1/(4n^2)$. Hence 
\[\Pr[X=Y] \ge \sum_{x \in S_X \cap S_Y} \Pr[X=Y=x] \ge 1/(8n).\]
To prove the second claim, for $n \ge 4$, observe that for every setting $x$ of $X$, $|S_y \setminus \{x\}|\ge \frac{3n}{4}-1 \ge n/2$, and therefore, the probability that $Y\ne X$ is at least $|S_y \setminus \{x\}| \cdot 1/(2n) \ge 1/4$.\jqed
\end{proof}

The following lemma gives a bound on the entropy of a variable that randomly selects out of two random values based on another $0$-$1$ valued random variable. We present a simple proof suggested by an anonymous reviewer.

\begin{lemma}\label{lem:entropy_bound}
Let $X_0$, $X_1$, and $Y$ be independent discrete random variables, where $X_0$ and $X_1$ are distributed on the same set $\Omega$ and $Y$ is distributed on $\{0,1\}$. Then 
\[H(X_Y) \le 1 + \sum_{i=0}^1 \Pr[Y=i] \cdot H(X_i).\]
\end{lemma}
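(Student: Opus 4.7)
The plan is to decompose the entropy of the selector random variable $X_Y$ by conditioning on the value of the selector $Y$. First I would write
\[
H(X_Y) \le H(X_Y, Y) = H(Y) + H(X_Y \mid Y),
\]
using the standard fact that adding a variable cannot decrease entropy and the chain rule. Since $Y$ is $\{0,1\}$-valued, $H(Y) \le 1$, which accounts for the additive ``$1$'' term on the right-hand side of the statement.

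Next, I would expand the conditional entropy explicitly:
\[
H(X_Y \mid Y) = \sum_{i=0}^{1} \Pr[Y = i] \cdot H(X_Y \mid Y = i).
\]
The crucial observation is that conditioned on $Y = i$, the variable $X_Y$ is identically $X_i$, and because $X_0$, $X_1$, $Y$ are mutually independent, conditioning on $Y = i$ does not change the distribution of $X_i$. Hence $H(X_Y \mid Y = i) = H(X_i)$, which gives $H(X_Y \mid Y) = \sum_{i=0}^{1} \Pr[Y=i] \cdot H(X_i)$. Combining with the first display yields the claimed inequality.

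There is essentially no obstacle here; the only point worth being careful about is invoking the independence of $Y$ from $(X_0, X_1)$ when asserting that the conditional distribution of $X_i$ given $Y = i$ equals the unconditional distribution of $X_i$. Everything else is a direct application of the chain rule and the bound $H(Y) \le \log 2 = 1$.
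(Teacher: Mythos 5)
Your proof is correct, and it takes a different (though closely related) route from the paper's. The paper argues by direct computation from the definition of entropy: writing $q=\Pr[Y=0]$ and $p_{i,\omega}=\Pr[X_i=\omega]$, it notes $\Pr[X_Y=\omega]=qp_{0,\omega}+(1-q)p_{1,\omega}$, applies the subadditivity/concavity of $x\mapsto x\log(1/x)$ termwise, and regroups the resulting sum to obtain $H(X_Y)\le H(q)+q\,H(X_0)+(1-q)\,H(X_1)$, finishing with $H(q)\le 1$. You arrive at the identical intermediate bound $H(X_Y)\le H(Y)+\sum_{i}\Pr[Y=i]\,H(X_i)$, but you package the calculation as standard entropy identities: monotonicity of entropy under adjoining a variable, the chain rule, and the observation---which you correctly justify via the independence of $Y$ from $(X_0,X_1)$---that the conditional law of $X_Y$ given $Y=i$ equals the unconditional law of $X_i$, so $H(X_Y\mid Y=i)=H(X_i)$. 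The two derivations encode the same inequality; yours is shorter and makes the information-theoretic content transparent, while the paper's is self-contained at the level of elementary manipulations of $\sum p\log(1/p)$ and does not presuppose the chain rule or the conditioning argument. No gap in either.
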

\begin{proof}
It follows from basic properties of entropy that
\begin{align*}
H(X_Y) &\le H(Y;X_Y) = H(Y) + H(X_Y|Y)\\
&\le 1 + H(X_Y|Y)
= 1 + \sum_{i=0}^1 \Pr[Y=i] \cdot H(X_i).
\end{align*}
\jqed%
\end{proof}

\section{The Main Tool and Its Applications}
\label{sec:reductions}
The main tool in our paper is the following lower bound for the communication complexity of set chasing intersection.
\begin{theorem}\label{thm:main}
For $n$ larger than some positive constant and $p$ such that $1 < p \le \frac{\log n}{\log \log n}$,
\[R_{1/10}(\opintersect(\SC_{n,p})) = 
\Omega \left(\frac{n^{1+1/(2p)}}{p^{16}\cdot\log^{3/2}n}\right).\]
\end{theorem}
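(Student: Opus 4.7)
The plan is to follow the three-step blueprint outlined in Section~\ref{sec:overview}: a randomized reduction converting $\opintersect(\SC_{n,p})$ to a disjunction of equality-pointer-chasing instances, an information-theoretic direct sum argument, and a single-instance information cost lower bound for $\opequal(\PC_{n,p})$ (in fact for its variant $\LEP_{n,p,r}$, used to control the non-injectivity that would otherwise spoil the reduction).

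For Step~A, I would show that a protocol solving $\opintersect(\SC_{n,p})$ with error $1/10$ and communication $C$ yields a protocol for $\opor_t(\LEP_{n,p,r})$ with comparable communication, where $t = \Theta(n^{1/(2p)}/\poly(p,\log n))$ and $r = C_\star(1+\log n)$. Given $t$ instances of $\LEP_{n,p,r}$ with functions $\{f_i^{(j)}, g_i^{(j)}\}$, I independently apply a uniformly random permutation to each layer on each side and compose with its inverse in the next layer so that the chased pointer value of every $\PC$ instance is preserved. Overlaying the $t$ scrambled instances gives an $\SC_{n,p}$ instance where each vertex has exactly $t$ out-neighbors per side. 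A Yes coordinate $j$ puts its pointer value into the intersection; if all coordinates are No, the probability that the two reached sets intersect is bounded by a union over the $t^{2p}$ potential matching paths through the layered graph, where the $r$-non-injectivity guarantee of $\LEP$ controls the collision probability contributed by each path. Choosing $t^{2p}$ significantly below $n$ makes the total $o(1)$.

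For Step~B, I take the input distribution for $\opor_t(\LEP_{n,p,r})$ to be the $t$-fold product $\psi^{\otimes t}$ of the uniform distribution $\psi$. Since $\psi^{\otimes t}$ is a product, $\icost_{\psi^{\otimes t}}(\Pi) \ge \sum_{j=1}^t I(X^{(j)};\Pi(X))$. For each coordinate $j$, I embed a single $\LEP$ instance there and sample the other $t-1$ coordinates fresh from $\psi$. Under $\psi$ the probability of a Yes is $O(1/n)$, essentially via Lemma~\ref{lem:collision} applied to the two independent pointer values; hence with probability $1-O(t/n)$ all other coordinates are No and the $\opor_t$ output coincides with the answer on coordinate~$j$. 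Averaging over $j$ gives $\IC_{\psi,\delta + O(t/n)}(\LEP_{n,p,r}) \le \icost_{\psi^{\otimes t}}(\Pi)/t$, so an efficient disjunction protocol produces a protocol of small information cost for a single instance.

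Step~C is the technical heart, and I expect it to be the main obstacle. First, I would adapt the message compression of Jain--Radhakrishnan--Sen to turn a protocol of information cost $\IC$ on the uniform distribution into a deterministic protocol that, with probability at least $1-\eps$, sends only messages of length $O(\IC/\eps^2)$ per round, and otherwise is allowed one long message. The crucial twist over the textbook simulation is that the increase in error is additive $O(\eps)$ rather than $O(\sqrt{\eps})$, which is essential because the overall error must remain $o(1/n)$ while $\eps$ needs to be as large as $1/\poly(p)$ for $\IC/\eps^2$ to be useful. Second, I would extend the Nisan--Wigderson and Guha--McGregor round-elimination arguments to the $2p$-player equality setting in $p-1$ rounds: the typical-conciseness of the messages forces, on a constant fraction of protocol leaves, both final pointer values $X$ and $Y$ to have conditional entropy at least $\log n - 48^{-2}$. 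By Lemma~\ref{lem:collision}, any deterministic output at such a leaf errs with probability $\Omega(1/n)$, since $\Pr[X=Y]\ge 1/(8n)$ and $\Pr[X\ne Y]\ge 1/4$. The delicate points are propagating the entropy bound through round elimination while bookkeeping both sides simultaneously, handling the rare ``abort'' branch without blowing up the error, and arguing about equality rather than the harder value-recovery problem treated in prior work. Together the three steps yield $\IC \ge \Omega(n)/\poly(p,\log n)$ for $\opequal(\PC_{n,p})$, which pushed back through Steps~B and~A gives the claimed $n^{1+1/(2p)}/(p^{16}\log^{3/2} n)$ bound, the $\log^{3/2} n$ arising from the combined overhead of $r = O(\log n)$ and the compression step.
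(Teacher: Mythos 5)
Your proposal follows essentially the same three-step route as the paper: Step~A is the paper's Lemma~\ref{lem:set_inters_vs_or_lep} (random per-layer permutation scrambling, overlay, union bound over $t^{2p}$ path pairs with collisions controlled by $r$-non-injectivity), Step~B is Lemmas~\ref{lem:coordinatewise} and~\ref{lem:single_coordinate}, and Step~C is the combination of the message-compression Lemma~\ref{lem:make_concise} with the Nisan--Wigderson-style bound of Lemma~\ref{lem:concise_lep_lb} via the collision Lemma~\ref{lem:collision}. One correction to your description of the compression step: the error increase there cannot be additive $O(\eps)$ with $\eps = 1/\poly(p)$, since that would swamp the $O(1/n)$ error budget needed for Step~B; what the paper actually achieves is a \emph{multiplicative} doubling of the error, while the probability-$\eps$ event is the sending of a long message (not an error), which is then absorbed as a ``non-confusing'' branch in the deterministic typically-concise lower bound.
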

We now present relatively straightforward applications of this theorem to three graph problems in the streaming model.
\begin{theorem}
Solving the following problems with probability at least $9/10$ in the streaming model with $p=O\left(\frac{\log n}{\log \log n}\right)$ passes requires at least $\Omega \left(\frac{n^{1+1/(2(p+1))}}{p^{19}\cdot\log^{3/2}n}\right)$ bits of space:
\begin{description}
 \descitem{Problem 1} For two given vertices $u$ and $v$ in an undirected graph, check if the distance between them is at most $2(p+1)$.
 \descitem{Problem 2} For two given vertices $u$ and $v$ in a directed graph, check if there is a directed path from $u$ to $v$.
 \descitem{Problem 3} Test if the input graph has a perfect matching.
\end{description}
\end{theorem}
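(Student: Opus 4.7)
The plan is to reduce each of the three streaming problems to $\opintersect(\SC_{n,p+1})$ and then invoke Theorem~\ref{thm:main} with the parameter $p+1$ in place of $p$. The generic streaming-to-communication step is standard: a $p$-pass streaming algorithm using space $S$ can simulate any $(2(p+1), p)$-communication protocol by splitting the input edge stream into $2(p+1)$ consecutive blocks, one per player, and passing the $S$-bit streaming state along the chain $P_1 \to P_2 \to \dots \to P_{2(p+1)}$ in each of the $p$ passes. Consequently, a space-$S$ streaming algorithm for any graph problem to which $\opintersect(\SC_{n,p+1})$ reduces gives a private-coin protocol of maximum message length at most $S$.

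The heart of the argument is a layered graph construction. Given an instance of $\opintersect(\SC_{n,p+1})$ with functions $f_1, \dots, f_{p+1}$ and $g_1, \dots, g_{p+1}$, I form a graph on $2p+3$ layers $L^L_0, L^L_1, \dots, L^L_{p+1}, L^R_p, \dots, L^R_0$, where each layer contains one vertex per element of $[n]$ and the ``middle'' layer $L^L_{p+1}$ is identified vertex-for-vertex with what would otherwise have been $L^R_{p+1}$. Between $L^L_j$ and $L^L_{j+1}$ I place an edge from the $x$-copy to the $y$-copy exactly when $y \in f_{p+1-j}(x)$, so player $P_{p+1-j}$ owns precisely these edges; the right side is symmetric in the $g_i$. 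Taking $u$ and $v$ to be the copies of $1$ in $L^L_0$ and $L^R_0$, respectively, the set reached by the first set chasing computation is exactly the set of middle-layer vertices reachable from $u$ via a length-$(p+1)$ forward walk, and similarly for $v$. Call these two output sets $S_1$ and $S_2$.

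The three reductions are then routine checks. For Problem~1, any $u$-$v$ path must cross all $2(p+1)$ layer boundaries, and the strictly layered edge set (edges only between adjacent layers) forces any backtracking to cost at least two extra edges; hence the distance equals $2(p+1)$ when $S_1 \cap S_2 \ne \emptyset$ and is at least $2(p+1)+2$ otherwise. For Problem~2, orient all left edges from $L^L_j$ toward $L^L_{j+1}$ and all right edges from $L^R_{j+1}$ toward $L^R_j$; then $u$ has only outgoing edges and $v$ only incoming edges, and a directed $u$-$v$ path exists iff some middle-layer vertex lies in $S_1 \cap S_2$. For Problem~3, apply the standard reduction from directed $u$-$v$ connectivity to bipartite perfect matching to the Problem~2 graph $G$: take left vertices $\{w^L : w \in V(G) \setminus \{v\}\}$ and right vertices $\{w^R : w \in V(G) \setminus \{u\}\}$, include a bypass edge $\{w^L, w^R\}$ for every $w \notin \{u, v\}$ and a crossing edge $\{a^L, b^R\}$ for every directed edge $(a, b)$ of $G$. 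Then $H$ has a perfect matching iff $G$ has a directed $u$-$v$ path iff $S_1 \cap S_2 \ne \emptyset$; since the bypass edges depend on no player's function, any fixed player can insert them into the stream.

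Each of the three graphs has $N = O(p \cdot n)$ vertices, so setting $n = \Theta(N/p)$ and plugging into Theorem~\ref{thm:main} with parameter $p+1$ yields the claimed $\Omega(N^{1+1/(2(p+1))}/p^{20}\log^{3/2} N)$ lower bound: the $(p+1)^{16}$ in Theorem~\ref{thm:main} together with the polynomial-in-$p$ losses from the vertex blow-up and from $n^{1/(2(p+1))}$ versus $(N/p)^{1/(2(p+1))}$ comfortably fit under $p^{20}$. The main obstacle I anticipate is making the distance analysis of Problem~1 airtight: one must argue that no undirected backtracking inside the layered graph can yield a $u$-$v$ path of length $2(p+1)$ unless an element of $S_1 \cap S_2$ witnesses it, which uses a parity argument on $+1/-1$ layer-index steps together with the fact that a ``straight shot'' into a layer-$(p+1)$ vertex $w$ exists precisely when $w$ lies in the corresponding set chasing image. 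The remaining ingredients (the orientation in Problem~2 and the bipartite matching reduction in Problem~3) are direct adaptations of well-known constructions.
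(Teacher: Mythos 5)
Your proposal is correct, and for Problems~1 and~2 it is essentially identical to the paper's proof: the same merged-middle-layer undirected/directed layered graph built from $\opintersect(\SC_{k,p+1})$, the same parity/monotonicity argument showing the distance is $2(p+1)$ iff the two reached sets intersect, the same stream ordering by player, and the same $O(pn)$-vertex accounting absorbing the losses into $p^{20}$. For Problem~3 you take a genuinely different (though equally standard) route. The paper stays inside the undirected layered graph: it overlays a near-perfect matching $M$ between consecutive layers, omitting the two edges at $u$ and $v$, and argues that a perfect matching exists iff $M$ admits an augmenting path from $u$ to $v$, which by the alternation constraint must march monotonically left to right. You instead compose your Problem~2 digraph with the classical reduction from directed $s$--$t$ reachability to bipartite perfect matching (split each vertex, bypass edges for internal vertices, crossing edges for arcs); your observation that the bypass edges are input-independent and can be prepended to the stream is the analogue of the paper's remark that the matching edges go at the beginning of the stream. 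Both constructions are sound and give the same asymptotic bound; the paper's version is self-contained within the layered picture, while yours outsources the correctness argument to a well-known reduction --- just make sure, when invoking it, to rule out bypass (fixed-point) steps in the matching-induced permutation before concluding that a perfect matching yields a genuine directed $u$--$v$ walk.
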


\begin{proof}
Let us consider the problems one by one. For Problem 1, we turn an instance of $\opintersect(\SC_{k,p+1})$ into a graph on $n = (2p+3) \cdot k$ vertices.
We modify the graph in Figure~\ref{fig:set_intersect} as follows. First, we make all edges undirected.
Second, we merge every pair of middle vertices connected with a horizontal line into a single vertex. Any path between the top leftmost vertex $u$ and the top rightmost vertex $v$ 
is of length at least $2p+2$.
The length of the path is exactly $2p+2$ if and only if it moves to the next layer in each step. Note that this corresponds to the case that the final sets for two instances of $\SC_{k,p+1}$ intersect. We create the input stream by inserting first the edges describing the function held by $P_1$, then by $P_2$ and so on, until $P_{2p+2}$. If there is a streaming algorithm for the problem that uses at most $T$ bits of space, then clearly there is a communication protocol for $\opintersect(\SC_{k,p+1})$ with total communication $(2p+2) \cdot p \cdot T$ and the same error probability as the streaming algorithm. The protocol can be obtained by the players by simulating the streaming algorithms on their parts of the input and communicating its state. This implies that $T =
\Omega \left(\frac{1}{p^2} \cdot \frac{(n/p)^{1+1/(2(p+1))}}{p^{16}\cdot\log^{3/2}n}\right)
= \Omega \left(\frac{n^{1+1/(2(p+1))}}{p^{19}\cdot\log^{3/2}n}\right)$, where we use the fact that $p^{1/(2(p+1))} = O(1)$.

For Problem 2, the reduction is almost the same, with the only difference being that we make all edges directed from left to right and we want to figure out if there is a directed path from the top leftmost vertex to the top rightmost vertex. Such a path exists if and only if the final sets in the instance of $\opintersect(\SC_{k,p+1})$ intersect. 

\begin{figure*}
\begin{center}
\includegraphics[width=.96\hsize]{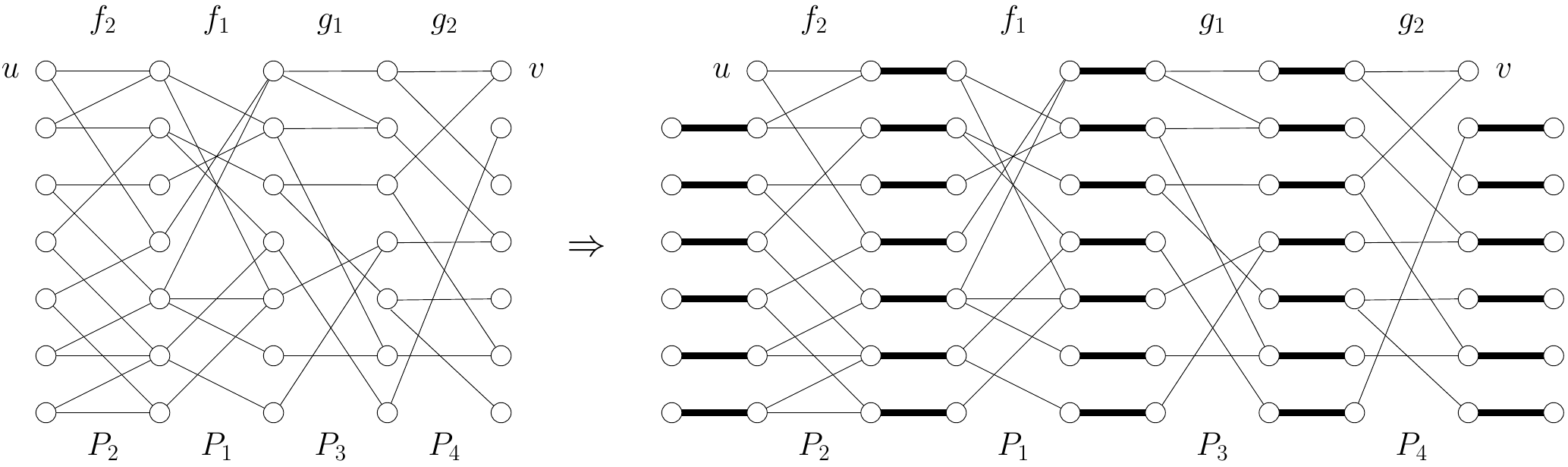}
\end{center}
\caption{Reduction to the perfect matching problem.}
\label{fig:reduction}
\end{figure*}

For Problem~3, the reduction is slightly more complicated. We show how to modify the hard instance $G$ that we have created for Problem~1. Let us first add a perfect matching before and after every layer of edges of the hard instance for Problem~1, except for the first and the last layer, in which we omit one edge. The omitted edges are incident to the vertices $u$ and $v$ corresponding to value 1, i.e., the vertices that we want to connect using a path going directly from left to right in Problem~1. See Figure~\ref{fig:reduction} for an example. Note that the additional edges constitute a matching $M$ in which all but two vertices are matched. Now the graph has a perfect matching iff there exists an augmenting path in $M$ between $u$ and $v$, which are the unmatched vertices. Any augmenting path has to alternate between matched and unmatched edges, which implies in our case, that it has to go directly from left to right. Therefore, any augmenting path in $M$ corresponds to a path going directly from left to right in $G$ and connecting $u$ and $v$. The only difference is that the augmenting path has additional edges coming from the matchings that were inserted into $G$. Therefore the streaming algorithm for testing if a graph has a perfect matching can be used to create a protocol for $\opintersect(\SC_{k,p+1})$, which requires relabeling endpoints of some edges---in order to simulate splitting of vertices---and inserting additional edges at the end of the stream.
\jqed\end{proof}

\section{Step 1: Information Complexity of Pointer Chasing Equality}
\label{sec:ic_lep}
\label{sec:step1}
To prove the main theorem of the paper, we first show a lower bound for the information complexity of Limited Pointer Chasing Equality. 
\begin{lemma}\label{lem:ic_lep}
Let $n$ and $p$ be positive integers such that $n \ge 30p^2$ and $p > 1$.
Then
\begin{align*}
\IC_{\mu,1/(64n)}(\LEP_{n,p,C_\star(1+\log n)})
&\ge \frac{n}{2^{29} \cdot 3^5\cdot p^{16}} - \frac{\lceil2 \log n\rceil}{2^{13}\cdot 3^2 \cdot p^8} - 2\\
&=\Omega\left(\frac{n}{p^{16}}\right)- O\left(\frac{\log n}{p^{8}}+1\right),
\end{align*}
where $\mu$ is the uniform distribution on all possible inputs for the problem.		
\end{lemma}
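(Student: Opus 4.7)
The plan is to prove the contrapositive: if a private-coin protocol $\Pi$ for $\LEP_{n,p,C_\star(1+\log n)}$ with error at most $\delta := 1/(64n)$ has information cost $\IC := \icost_\mu(\Pi)$ much smaller than the stated bound, then we derive a contradiction. Low information cost alone does not constrain individual message lengths, so I would proceed in two stages: first convert $\Pi$ into a \emph{deterministic} protocol whose messages are typically short, and then show directly that no such protocol can decide equality of two pointer-chasing outputs with error $o(1/n)$ on uniform inputs.

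\textbf{Step 1 (message compression).} Starting from $\Pi$, I would adapt the round-by-round compression of Jain--Radhakrishnan--Sen~\cite{JRS}, but with the twist flagged in the proof overview: their standard ``$\gamma$ error increase costs $\Omega(1/\gamma^2)$ bits'' trade-off is unaffordable when $\delta = O(1/n)$, so instead we allow each round's message to be ``long'' with some small probability $\eps = 1/\poly(p)$ and ``short''---of length $O(\IC/\eps^2)$---otherwise. Applying this to each of the $\Theta(p^2)$ messages, and then averaging over the public randomness introduced by the compression (Markov), I would obtain a deterministic protocol $\Pi'$ with error at most $2\delta$ under $\mu$ in which every message has length $O(\IC/\eps^2)$ except on an event of $\mu$-probability $O(p^2\eps)$. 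Choosing $\eps$ so that $\IC/\eps^2 \le n/p^c$ for the constant $c$ dictated by Step~2, while keeping $p^2\eps$ absorbable into the error budget, fixes the balance; the eventual constraint $\IC = \Omega(n/p^{16})$ is what emerges after tuning these two inequalities against each other.

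\textbf{Step 2 (lower bound for typically concise deterministic protocols).} I would now prove that any deterministic $(2p,p-1)$-protocol $\Pi'$ for $\LEP_{n,p,C_\star(1+\log n)}$ whose per-message length is at most $n/p^c$ outside a small-probability event must err with probability $\Omega(1/n)$ under $\mu$. Because the $\LEP$ convention forces output $1$ whenever some $f_i$ or $g_i$ is $C_\star(1+\log n)$-non-injective, and by choice of $C_\star$ this happens with $\mu$-probability at most $2p\cdot 1/(2n^2)$, we may effectively work with $\opequal(\PC_{n,p})$ up to an $O(p/n^2)$ correction. Then, extending the round-elimination entropy framework of Nisan--Wigderson~\cite{NisanW93} and Guha--McGregor~\cite{GuhaM09}, I would show that for a \emph{constant fraction} of leaves of $\Pi'$, the conditional distributions of both $A := f_1(\cdots f_p(1)\cdots)$ and $B := g_1(\cdots g_p(1)\cdots)$ given the transcript have entropy at least $\log n - 48^{-2}$; Lemma~\ref{lem:entropy_bound} will be the workhorse for tracking entropy drops when the players' messages split the input distribution. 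The rectangle property of deterministic protocols ensures that at every leaf the conditional distribution of the inputs is a product distribution over the two instances, hence $A$ and $B$ are conditionally independent. Lemma~\ref{lem:collision} then delivers, at each such high-entropy leaf, $\Pr[A=B \mid \text{leaf}] \ge 1/(8n)$ and $\Pr[A\ne B \mid \text{leaf}] \ge 1/4$. Whatever bit $\Pi'$ outputs at such a leaf, it errs with probability at least $1/(8n)$ there; summing over a constant fraction of leaves yields global error $\Omega(1/n)$, contradicting the $2\delta = 1/(32n)$ bound inherited from Step~1.

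\textbf{Main obstacle.} The heart of the argument is Step~2, which faces three intertwined difficulties: (i) the classical Nisan--Wigderson bound is stated for recovering the pointer \emph{value}, but we need a bound for \emph{equality testing} of two pointer values, a one-bit decision problem; (ii) we need a constant fraction of transcripts at which \emph{both} marginals have high entropy simultaneously, not just existence of a single such transcript, which means carrying the entropy-drop induction along two coupled instances at once; and (iii) the $\LEP$ variant is essential because the round-elimination argument needs the functions involved to be almost-injective, and $r$-non-injectivity has to be handled gracefully via the $\LEP$ output convention. Step~1 is essentially a bookkeeping variation on \cite{JRS}; the precise constants ($p^{16}$, the $\log^{3/2}n$ and $\log n/p^8$ correction terms, and the numerical prefactors) in the lemma statement arise from optimizing $\eps$ in the compression against the per-message length and constant-fraction-of-leaves slack extracted from the round-elimination analysis.
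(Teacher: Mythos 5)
Your proposal is essentially the paper's own proof: the two-step decomposition into (1) converting a low-information-cost protocol into a deterministic ``typically concise'' protocol that errs at most twice as often (Lemma~\ref{lem:make_concise}, with the $\eps$-probability-of-long-message twist on \cite{JRS} exactly as you describe) and (2) a Nisan--Wigderson/Guha--McGregor-style entropy induction showing that a constant fraction of leaves remain ``confusing'' --- both $A$ and $B$ retain entropy $\ge \log n - 48^{-2}$, they are independent by the rectangle property, and Lemma~\ref{lem:collision} forces error $\ge 1/(8n)$ at each such leaf (Lemma~\ref{lem:concise_lep_lb}) --- matches the paper precisely, including the role of the $C_\star(1+\log n)$-non-injectivity convention. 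The only small slip is attributing the entropy tracking in Step~2 to Lemma~\ref{lem:entropy_bound}; in the paper that lemma is used in the direct-sum step, while the entropy-drop induction here relies on subadditivity, a Kraft-inequality bound on the shrinkage of the function rectangles $F^z_i, G^z_i$, and Fact~\ref{fact:almost_uniform} to handle the non-uniformity of $a_j$.
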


The proof consists of two smaller steps. First we show that if there is a protocol with low information cost on the uniform distribution, then there is a deterministic protocol that on the uniform distribution sends mostly short messages, and errs with at most twice the probability. Then we show that the messages in the deterministic protocol cannot be too short. Otherwise, with probability at least $1/2$, the protocol would have little knowledge about whether the solutions to two instances of pointer chasing are identical. In this case the protocol would still err with probability $\Omega(1/n)$.

\subsection{Transformation to Deterministic Typically Concise Protocols}

Let us first define concise protocols, which send short messages most of the time.

\begin{definition}
We say that a protocol $P$ is an \emph{$(m,\eps)$-concise} protocol for an input distribution $\mu$ if for each $i$, the probability that the $i$-th message in the protocol is longer then $m$ is bounded by $\eps$.
\end{definition}

The following three facts from \cite{JRS_substate,JRS} are very useful in our proofs. They regard information theory and random variables. For a distribution $P$ on $\mathbb N$, we write $P(i)$ to denote the probability of selecting $i$ from $P$. For two distributions $P$ and $Q$, we write $\kldiv(P\|Q)$ to denote the Kullback-Leibler divergence of $Q$ from $P$.

\begin{fact}[Chain Rule\,{\cite[Fact 1]{JRS}}]\label{fact:JRS_mutual_inf}
Let $X$, $Y$, and $Z$ be random variables. The following identity holds: $I(X;Y,Z) = I(X;Y) + I(X;Z|Y)$.
\end{fact}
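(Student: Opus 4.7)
The plan is to prove this by reducing the identity to the definition of mutual information in terms of entropy and then applying the chain rule for entropy. Since the statement is the standard chain rule for mutual information, the argument is short and structural rather than combinatorial, so my proposal is really just to select the cleanest path among several equivalent derivations.

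First I would expand each of the three terms using the identity $I(A;B) = H(A) - H(A|B)$ (and its conditional version $I(A;B|C) = H(A|C) - H(A|B,C)$). Concretely, I would write
\[
I(X;Y,Z) = H(X) - H(X|Y,Z), \qquad I(X;Y) = H(X) - H(X|Y),
\]
and
\[
I(X;Z|Y) = H(X|Y) - H(X|Y,Z).
\]
Adding the latter two, the $H(X|Y)$ terms telescope, leaving $H(X) - H(X|Y,Z)$, which equals $I(X;Y,Z)$. That is essentially the whole proof.

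As a sanity check on the approach, I would note that this is equivalent to the entropy chain rule $H(Y,Z) - H(Y,Z|X) = (H(Y) - H(Y|X)) + (H(Z|Y) - H(Z|X,Y))$, so any reader who prefers to verify the identity via the symmetric formulation $I(X;Y,Z) = H(Y,Z) - H(Y,Z|X)$ can do so by applying the standard entropy chain rule to both $H(Y,Z)$ and $H(Y,Z|X)$ and grouping terms. Either route relies only on textbook identities from information theory.

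There is no serious obstacle here: the only thing to be careful about is that all entropies are finite (which is automatic since $X$, $Y$, $Z$ are arbitrary discrete random variables with well-defined joint distribution, and in our applications they take values in finite sets, so $H(X|Y)$, $H(X|Y,Z)$, etc., are all finite and the subtraction is unambiguous). Accordingly the proof in the paper can simply cite the definitions and the telescoping, or defer to the standard reference, since Fact~\ref{fact:JRS_mutual_inf} is restated solely for use in the subsequent information-theoretic manipulations.
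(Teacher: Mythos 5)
Your proof is correct: the telescoping of $H(X|Y)$ after expanding all three terms via $I(A;B)=H(A)-H(A|B)$ and $I(X;Z|Y)=H(X|Y)-H(X|Y,Z)$ is exactly the standard chain-rule argument, and the paper itself gives no proof at all, simply citing this as Fact 1 of \cite{JRS}. So your write-up supplies the (routine) verification that the paper defers to the reference, and nothing further is needed.
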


\begin{fact}[{\cite[Fact 2]{JRS}}]\label{fact:use_S_for_I}
Let $X$ and $Y$ be a pair of random variables. Let $P$ be the distribution of $Y$ and let $P_x$ be the distribution of $Y$ given $X=x$. Then $I(X;Y) = E_X[\kldiv(P_X\|P)]$.
\end{fact}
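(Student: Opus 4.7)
The plan is to prove the identity by direct unfolding of the definitions of mutual information and KL-divergence, and then averaging over $X$. I will work with discrete random variables; the continuous case follows the same template with sums replaced by integrals, and the measure-theoretic version is standard once the argument is verified in the discrete case.

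First I would write $I(X;Y) = \sum_{x,y} \Pr[X=x, Y=y] \log \frac{\Pr[X=x, Y=y]}{\Pr[X=x]\,\Pr[Y=y]}$ using the standard definition. Then I would factor the joint distribution as $\Pr[X=x, Y=y] = \Pr[X=x] \cdot \Pr[Y=y \mid X=x] = \Pr[X=x] \cdot P_x(y)$, and identify $\Pr[Y=y] = P(y)$. Substituting into the summand cancels the $\Pr[X=x]$ inside the logarithm, leaving $\log \frac{P_x(y)}{P(y)}$, so the expression becomes
\[
I(X;Y) = \sum_x \Pr[X=x] \sum_y P_x(y) \log \frac{P_x(y)}{P(y)}.
\]

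The inner sum is exactly the definition $\kldiv(P_x \| P) = \sum_y P_x(y) \log \frac{P_x(y)}{P(y)}$, so the outer sum becomes $\sum_x \Pr[X=x]\cdot \kldiv(P_x \| P) = E_X[\kldiv(P_X \| P)]$, which is the claimed identity. There is essentially no obstacle here: the entire content is a bookkeeping step, and the only minor subtlety is handling $y$ with $P(y)=0$, which is resolved by noting that such $y$ also has $P_x(y)=0$ for every $x$ in the support of $X$ (since $P(y) = \sum_x \Pr[X=x] P_x(y)$), and using the convention $0 \log (0/0) = 0$ so the corresponding terms vanish on both sides. This calculation is short enough that in the paper I would simply present it as a one-line derivation or cite Cover and Thomas.
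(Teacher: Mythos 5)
Your derivation is correct and is the standard argument: the paper states this as a cited fact from \cite{JRS} without giving a proof, and your expansion of $I(X;Y)$, factoring the joint distribution and recognizing the inner sum as $\kldiv(P_x\|P)$, is exactly the textbook justification (including the correct handling of the $P(y)=0$ terms). Nothing further is needed.
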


\begin{fact}[{\cite[Substate Theorem]{JRS_substate,JRS}}]\label{fact:substate}
Let $P$ and $Q$ be probability distributions on $\mathbb N$ such that $\kldiv(P\|Q) = a$.
Let $\eps \in (0,1)$ and let $\good = \{i \in \mathbb N:P(i)\cdot 2^{-(a+1)/\eps}\le Q(i)\}$.
If $X$ is a random variable distributed according to $P$, then $\Pr[X \in \good] \ge 1-\eps$.
\end{fact}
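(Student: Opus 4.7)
The plan is to prove Lemma \ref{lem:ic_lep} via the two-part strategy outlined as Step C in the overview: first compress any low-information-cost randomized protocol for $\LEP_{n,p,r}$ (with $r = C_\star(1+\log n)$) into a deterministic protocol that sends mostly short messages while only doubling the error, and then show that no such ``typically concise'' deterministic protocol can solve $\LEP_{n,p,r}$ with error at most $1/(32n)$ under $\mu$ unless its short-message length $m$ is $\Omega(n/p^{O(1)})$.

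\textbf{Compression step.} Suppose $\Pi$ is a private-coin protocol with information cost $\IC = \icost_\mu(\Pi)$ and error at most $1/(64n)$. For a parameter $\eps = 1/\poly(p)$, I want a deterministic $(m,\eps)$-concise protocol $\Pi'$ with $m = O(\IC/\eps^2)$ and error at most $1/(32n)$. A direct application of the round-by-round message compression of \cite{JRS} is too costly here, since it blows the error up by $\Omega(1/n)$ per round when $m$ is only polynomial in $\IC$. The twist is to allow $\Pi'$, in an $\eps$-fraction of cases per message, to transmit the original long message verbatim, and otherwise to use a Substate-Theorem-based short encoding. Concretely, for each round let $P_x$ be the conditional distribution of the next message given the sender's input $x$ and the prior transcript, with marginal $P$; by Fact \ref{fact:use_S_for_I}, $E_x[\kldiv(P_x\|P)]$ equals the per-message mutual information. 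Fact \ref{fact:substate} guarantees that for each $x$ a $(1-\eps)$-fraction of messages (weighted by $P_x$) lies in the set $\good$, whose members can be hit by public-randomness rejection sampling using roughly $(\kldiv(P_x\|P)+1)/\eps$ bits. Averaging over rounds and over $x$ via Fact \ref{fact:JRS_mutual_inf} and fixing a good choice of public coins and random tape produces the desired deterministic $\Pi'$.

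\textbf{Lower bound against concise deterministic protocols.} Next I would show that any deterministic $(m,\eps)$-concise protocol for $\LEP_{n,p,r}$ on $\mu$ with error at most $1/(32n)$ forces $m \ge \Omega(n/p^{O(1)})$. Adapting the Nisan--Wigderson and Guha--McGregor pointer-chasing arguments, the idea is to track, round by round, the distributions of the two final pointer values $X = f_1(f_2(\cdots f_p(1)))$ and $Y = g_1(g_2(\cdots g_p(1)))$ conditioned on the transcript $T$. Conditioning on a short message of length $\ell$ reduces the entropy of the ``active'' layer's pointer value by at most $O(\ell)$ bits in expectation; after $2p(p-1)$ messages and with $m \cdot p^{O(1)} \ll n$, both $H(X \mid T)$ and $H(Y \mid T)$ remain $\ge \log n - \delta$ with $\delta = 48^{-2}$ on a constant fraction of transcripts. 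The $r$-non-injectivity cap built into $\LEP$ ensures that the sampled functions behave essentially like random functions after conditioning, so pre-images never concentrate and the pointer distributions stay close to uniform on a large support. The long-message fallback, occurring with total probability at most $2p(p-1)\eps$ by a union bound, is absorbed by choosing $\eps$ polynomially small in $p$. On the good transcripts, Lemma \ref{lem:collision} tells us that the two pointer values collide with probability between $1/(8n)$ and $3/4$; hence, regardless of the single bit the protocol outputs on such a transcript, it errs with probability $\Omega(1/n)$ there, contradicting the $1/(32n)$ error bound unless $m$ is as large as claimed. Chaining this with the compression step and the $1/\eps^2 = \poly(p)$ factor yields the $n/p^{16}$ bound on $\IC$, with the $O(\log n/p^8)$ subtraction accounting for the logarithmic overhead of naming messages and of the $r = C_\star(1+\log n)$ cap.

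\textbf{Main obstacle.} The most delicate part is the second step, in particular the joint analysis of $X$ and $Y$ conditioned on the transcript. Standard pointer-chasing lower bounds give an entropy bound for a single pointer value, but Lemma \ref{lem:collision} requires that $X$ and $Y$ be both nearly uniform \emph{and} (conditionally) independent given the transcript. I expect to obtain this via a careful hybrid argument across the $2p$ players, using Lemma \ref{lem:entropy_bound} to control entropy losses at branching points where the active layer of one instance differs from the other, and using the non-injectivity cap to keep pre-image multiplicities bounded so that entropy decrements compose cleanly. Tracking the exact $p^{O(1)}$ losses per round---two instances, $p-1$ rounds, $2p$ messages per round, and the Substate Theorem's $1/\eps^2$ factor with $\eps = 1/\poly(p)$---is what accumulates into the $p^{16}$ denominator in the final bound.
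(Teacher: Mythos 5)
Your proposal does not address the statement at hand. The statement is Fact~\ref{fact:substate}, the (classical) Substate Theorem: for distributions $P,Q$ on $\mathbb N$ with $\kldiv(P\|Q)=a$ and $\good=\{i: P(i)\cdot 2^{-(a+1)/\eps}\le Q(i)\}$, a sample $X\sim P$ lands in $\good$ with probability at least $1-\eps$. This is a self-contained information-theoretic inequality which the paper itself does not prove but imports from \cite{JRS_substate,JRS}. What you have written is instead a proof plan for Lemma~\ref{lem:ic_lep} (the information-complexity lower bound for $\LEP_{n,p,r}$), i.e., for the paper's Step~C; your sketch even \emph{invokes} Fact~\ref{fact:substate} as a tool inside the compression step, but nowhere does it argue why the inequality $\Pr[X\in\good]\ge 1-\eps$ holds. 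So there is a genuine gap: the target statement is simply not proven.

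For the record, the Fact itself admits a short Markov-type argument. Let $B=\mathbb N\setminus\good$, so every $i\in B$ satisfies $\log\bigl(P(i)/Q(i)\bigr)>(a+1)/\eps$. Writing $a=\sum_i P(i)\log\bigl(P(i)/Q(i)\bigr)$ and splitting the sum over $B$ and $\good$, the contribution of $\good$ is bounded below by $-1$: the only negative terms come from indices with $Q(i)>P(i)$, and over that set $S$ the log-sum (Jensen) inequality gives $\sum_{i\in S}P(i)\log\bigl(Q(i)/P(i)\bigr)\le P(S)\log\bigl(1/P(S)\bigr)\le (\log e)/e<1$. Hence $a\ge \Pr[X\in B]\cdot\frac{a+1}{\eps}-1$, and if $\Pr[X\in B]>\eps$ this would force $a>a$, a contradiction; therefore $\Pr[X\in\good]\ge 1-\eps$. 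If your intent was to prove Lemma~\ref{lem:ic_lep}, your outline is broadly consistent with the paper's Sections~\ref{sec:ic_lep}, but that is a different statement from the one you were asked to prove.
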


We now show an auxiliary lemma that shows that if $\kldiv(P\|Q)$ is bounded then a relatively short sequence of independent random variables distributed according to $Q$ suffices to generate a random value from $P$. The lemma is an adaptation of a lemma from \cite{JRS}.

\begin{lemma}\label{lem:good_set}
Let $P$ and $Q$ be two probability distributions on $\mathbb N$ such that $\kldiv(P\|Q) < \infty$. 
Let $(\Gamma_1,\Gamma_2,\Gamma_3,\ldots)$ be a sequence of independent random variables, each distributed according to $Q$.  Let $\Gamma_0 = -1$. Let $\eps \in (0,1)$.
There is a set $\good\subseteq\mathbb N$ and a random variable $R\in\mathbb N$ such that 
\begin{itemize}
\itemsep=1ex
\item $\sum_{i \in \good}P(i) \ge 1-\eps$,
\item for all $x \in \good$, $\Pr[\Gamma_R = x] = P(x)$,
\item $E[R] \le 2^{(\kldiv(P\|Q)+1)/\eps}$.
\end{itemize}
\end{lemma}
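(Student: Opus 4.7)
The plan is to combine the Substate Theorem (Fact~\ref{fact:substate}) with rejection sampling, plus a small extra mixing trick at the end to get the right marginal rather than a conditional one. I would begin by applying the theorem with $a = \kldiv(P\|Q)$ and the given $\eps$ to produce the set
\[ \good = \{i \in \mathbb{N} : P(i) \cdot 2^{-(a+1)/\eps} \leq Q(i)\}, \]
which already satisfies $\sum_{i \in \good} P(i) \geq 1 - \eps$. Setting $M = 2^{(a+1)/\eps}$, the defining condition of $\good$ is precisely that the acceptance ratio $P(i)/(M\,Q(i))$ lies in $[0,1]$ for every $i \in \good$, making standard rejection sampling legal on $\good$.

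I would then run rejection sampling against this ratio: for $j = 1,2,\ldots$, independently accept index $j$ with probability $P(\Gamma_j)/(M\,Q(\Gamma_j))$ when $\Gamma_j \in \good$ and with probability $0$ otherwise, and let $J$ be the first accepted index. The per-step acceptance probability is
\[ \alpha \;=\; \sum_{i \in \good} Q(i) \cdot \frac{P(i)}{M\,Q(i)} \;=\; \frac{P(\good)}{M}, \]
so $J$ is almost surely finite (when $P(\good) > 0$) with $E[J] = 1/\alpha = M/P(\good)$, and a routine calculation gives $\Pr[\Gamma_J = x] = Q(x)\,a_x/\alpha = P(x)/P(\good)$ for $x \in \good$. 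This is $P$ conditioned on $\good$, which overshoots the target $P(x)$ by exactly a factor of $1/P(\good)$.

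To correct for this, I would toss one more independent coin with heads probability $P(\good)$ and set $R = J$ on heads and $R = 0$ on tails (using that $\Gamma_0 = -1 \notin \good$). For $x \in \good$ this yields
\[ \Pr[\Gamma_R = x] \;=\; P(\good) \cdot \frac{P(x)}{P(\good)} \;=\; P(x), \]
and $E[R] = P(\good) \cdot (M/P(\good)) + (1-P(\good)) \cdot 0 = M = 2^{(\kldiv(P\|Q)+1)/\eps}$, delivering all three bullets. The one step that requires any real care is this mixing: without it, vanilla rejection sampling produces only the conditional distribution $P(\cdot)/P(\good)$, and the whole point of the extra $P(\good)$-biased coin is to reweight the conditional back to the unconditioned $P$, conveniently cancelling the factor $1/P(\good)$ in the expected stopping time as well. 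The degenerate case $P(\good) = 0$ is handled trivially by always setting $R = 0$.
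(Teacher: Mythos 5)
Your proof is correct and follows essentially the same route as the paper's: the Substate Theorem defines $\good$, and rejection sampling with acceptance ratio $P(i)\cdot 2^{-(\kldiv(P\|Q)+1)/\eps}/Q(i)$ produces $\Gamma_R$ with the right law on $\good$ and expected stopping time $2^{(\kldiv(P\|Q)+1)/\eps}$. The only cosmetic difference is that you divert the residual mass to $R=0$ with a single independent coin of bias $P(\good)$ at the end, whereas the paper interleaves an equivalent $R=0$ termination test into every step; both yield the same three guarantees.
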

\begin{proof}
Let the set $\good$ be defined as in Fact~\ref{fact:substate}, i.e., $\good = \{i \in \mathbb N:P(i) \cdot 2^{-(a+1)/\eps} \le Q(i)\}$, where we set $a = \kldiv(P\|Q)$. Following \cite{JRS}, we use rejection sampling to prove the lemma. Consider the following process. For consecutive positive integers $j$, starting from $1$, do the following. Look at the value $\gamma_j$ taken by $\Gamma_j$. If $\gamma_j \in \good$, toss a biased coin and with probability $P(j) \cdot 2^{-(a+1)/\eps} / Q(j)$, set $R = j$ and finish the process. If $\gamma_j \not\in \good$ or the coin toss did not terminate the process, toss another biased coin and with probability \[2^{-(a+1)/\eps} \cdot \frac{(1 - \sum_{i \in \good}P(i))}{(1 - 2^{-(a+1)/\eps} \cdot \sum_{i \in \good}P(i))},\] set $R=0$ and also terminate the process. Otherwise, continue with $j$ increased by 1. The process terminates with probability $1$.

Let us argue that $R$ and $\good$ have the desired three properties. The first property is a consequence of Fact~\ref{fact:substate}. To prove the other two, observe what happens when the process reaches a specific $j$. The process terminates with $R = j$ and $\Gamma_R = x$ for a specific $x \in \good$ with probability $P(x) \cdot 2^{-(a+1)/\eps}$. The probability that it terminates with $R = 0$ equals
exactly $2^{-(a+1)/\eps}\cdot(1-\sum_{i \in \good}P(i))$. Since these probabilities are independent of $j$, when the process eventually terminates, the probability of $\Gamma_R = x$ for each $x \in \good$ is exactly $P(x)$, which proves the second property. Finally, the probability that the process terminates for a specific $j$ after reaching it is exactly $2^{-(a+1)/\eps}$. Clearly, $E[R]$ is bounded from above by the expected $j$ for which the process stops, which in turn equals exactly $2^{(a+1)/\eps}$. 
\jqed\end{proof}

The following lemma allows for converting protocols with bounded information cost on a specific distribution into deterministic protocols that mostly send short messages on the same distribution. The proof of the lemma is a modification of the message compression result of \cite{JRS}. An important feature of our version is that the error probability is only doubled, instead of an additive constant increase which we cannot afford. A simple but key concept we use to achieve this is to allow the protocol to send long messages with some small (constant) probability. We then handle such ``typically concise'' protocols in our lower bound of Section \ref{subsec:NW-style}.

\begin{lemma}\label{lem:make_concise}
Let $\Pi$ be a private-randomness protocol for a $(p,r)$-communication problem $\mathcal P$ such that $\Pi$ errs with probability at most $\delta>0$ on a distribution $\mu$. For any $q > 0$, there is a protocol $\Pi'$ for $\mathcal P$ such that
\begin{itemize}
 \item $\Pi'$ is deterministic,
 \item $\Pi'$ errs with probability at most $2\delta$ on $\mu$,
 \item $\Pi'$ is $(m,q)$-concise, where $m = 128\cdot(\icost_\mu(\Pi)+2)\cdot(pr/q)^2$.
\end{itemize}
\end{lemma}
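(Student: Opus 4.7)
The plan is to apply the Jain--Radhakrishnan--Sen message-compression paradigm with one key twist: rather than paying the usual additive error term (which scales as $\Omega(1/\sqrt{\text{communication}})$ and is unaffordable in the regime $\delta = O(1/n)$ that Lemma~\ref{lem:ic_lep} needs), we allow the compressed protocol to occasionally fall back on sending the raw message. This keeps the error only a factor $2$ worse than the original, at the cost of a protocol that is merely \emph{typically} concise.

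First I would build an intermediate public-randomness protocol $\hat{\Pi}$ whose transcript is distributed identically to $\Pi$'s, so that its error on $\mu$ is still $\delta$. In round $i$ the speaker knows both $P_i$, the conditional distribution of $M_i$ given its input and the public transcript $M_{<i}$, and $Q_i$, the marginal of $M_i$ given only $M_{<i}$; the latter can be computed by every player from the transcript. The parties read a public tape of i.i.d.\ samples $\Gamma_1,\Gamma_2,\ldots$ from $Q_i$ and run the rejection-sampling scheme of Lemma~\ref{lem:good_set} with a parameter $\eps_{\text{samp}}$. With probability at least $1-\eps_{\text{samp}}$ this produces an index $R_i$ with $\Gamma_{R_i}\sim P_i$, which the speaker transmits preceded by a short flag; on failure the speaker transmits the raw message preceded by a different flag. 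In either case $M_i \sim P_i$, so $\hat{\Pi}$ has the same transcript distribution as $\Pi$.

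Next I would bound, for each round $i$, the probability (over the input and the tape) that the $i$-th message of $\hat{\Pi}$ exceeds a threshold $m$. Let $D_i = \kldiv(P_i \| Q_i)$. By the chain rule for mutual information, $\sum_i E[D_i] = \icost_\mu(\Pi)$, so in particular $E[D_i] \le \icost_\mu(\Pi)$. Two applications of Markov handle this: (i) $D_i \le \icost_\mu(\Pi)/\gamma$ except with probability $\gamma$; (ii) conditioned on this bound, Lemma~\ref{lem:good_set} yields $E[R_i] \le 2^{(\icost_\mu(\Pi)/\gamma + 1)/\eps_{\text{samp}}}$, and a further Markov with parameter $\beta$ gives $\log R_i \le (\icost_\mu(\Pi)/\gamma + 1)/\eps_{\text{samp}} + \log(1/\beta)$ except with probability $\beta$. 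Including the sampling-failure event, the $i$-th message exceeds $m := (\icost_\mu(\Pi)/\gamma + 1)/\eps_{\text{samp}} + \log(1/\beta) + O(1)$ bits with probability at most $\eps_{\text{samp}} + \gamma + \beta$.

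Finally I would determinize by fixing the public tape. Let $\text{err}(r)$ denote the error of $\hat{\Pi}$ under tape $r$ and $L_i(r)$ the probability over $X$ that message $i$ is long under tape $r$. We have $E_r[\text{err}(r)] \le \delta$ and $E_r[L_i(r)] \le \eps_{\text{samp}} + \gamma + \beta$. Markov plus a union bound over the $pr$ messages shows that the probability over $r$ that either $\text{err}(r) > 2\delta$ or some $L_i(r) > q$ is at most $\tfrac{1}{2} + pr(\eps_{\text{samp}} + \gamma + \beta)/q$. Setting $\eps_{\text{samp}} = \gamma = \beta = q/(9pr)$ keeps this strictly below $1$, so a good tape $r^\star$ exists, and hard-wiring it into $\hat{\Pi}$ yields the desired deterministic $\Pi'$ with $m \le 81\,\icost_\mu(\Pi)(pr/q)^2 + O(pr/q + \log(pr/q)) \le 128(\icost_\mu(\Pi) + 2)(pr/q)^2$. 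The main obstacle I anticipate is exactly this joint calibration: the three parameters must all shrink like $q/(pr)$ to permit the determinization union bound, and they simultaneously drive the $(pr/q)^2$ blow-up in $m$. The standard JRS compression cannot be invoked as a black box because its error bound is additive in $1/\sqrt{m}$, which would force $m = \Omega(n^2)$ to reach error $O(1/n)$; the raw-message fallback is the twist that lets us trade that additive error away in exchange for rare long messages.
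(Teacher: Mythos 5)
Your proposal is correct and follows essentially the same route as the paper: rejection sampling via Lemma~\ref{lem:good_set} (the substate theorem), the raw-message fallback flag to keep the error exactly preserved before determinization, and a Markov-plus-union-bound fixing of the public tape. The only difference is bookkeeping — the paper runs a hybrid sequence of protocols and bounds the expected encoded length per message via the per-round information $a_i$ and Jensen, whereas you apply an extra Markov step to $\kldiv(P_i\|Q_i)$ and then to $R_i$; both yield the same $(pr/q)^2$ blow-up.
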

\begin{proof}
There are $pr-1$ messages sent in $\Pi$.
We construct a series of intermediate protocols $\Pi'_{pr-1}$, $\Pi'_{pr-2}$, \ldots, $\Pi'_1$, where $\Pi'_{i}$ is a modification of the protocol $\Pi$ in which, for $i \le j \le pr-1$, the $j$-th message is likely to be short. The first $i-1$ messages of $\Pi'_i$ are the same as the messages of $\Pi$. In particular, $\Pi'_i$ uses only private randomness to generate the first $i-1$ messages. Later messages are generated using public randomness. The players in the modified protocols will reveal as much about their inputs as in the original protocol $\Pi$ and therefore, the protocols will err with the same probability, with the only difference being a different encoding of messages and the use of public randomness.

For convenience, let $\Pi'_{pr}$ be the original protocol $\Pi$. We now explain how we convert $\Pi'_{i+1}$ into $\Pi'_i$. Let $M_0$ be the random variable corresponding to the sequence of the first $i-1$ messages in $\Pi'_{i+1}$. Let $M_1$ be the random variable describing the $i$-th message in $\Pi'_{i+1}$. Let $M = (M_0,M_1)$. Recall that $M$ is distributed in the same way as its equivalent for the original protocol $\Pi$.
Let $P_j$ be the player sending the $i$-th message (i.e., $j \equiv i \pmod{p}$).
Let $X$ be the combined inputs of the other players, and let $Y$ be the input of the $j$-th player.
We write $M_1^{m_0}$ to describe the distribution of $M_1$ when $M_0 = m_0$.
Moreover, we write $M_1^{x,y,m_0}$ to describe the distribution of $M_1$ when $X=x$, $Y=y$, and $M_0 = m_0$. The distribution $M_1^{x,y,m_0}$ does not depend on $x$, because the protocol uses only private randomness and to generate the $i$-th message it only uses 
the previous messages and $y$, the input of the $j$-th player. 
It follows from Fact~\ref{fact:JRS_mutual_inf} and Fact~\ref{fact:use_S_for_I} that 
\begin{align*}
I(X,Y;M)
&=I(X,Y;M_0)+I(X,Y;M_1|M_0)\\
&=I(X,Y;M_0)+E_{M_0,X,Y}\left[
\kldiv\left(M_1^{X,Y,M_0}\middle\|M_1^{M_0}\right)\right].
\end{align*}
We define $a_i$ as $E_{M_0,X,Y}\left[
\kldiv\left(M_1^{X,Y,M_0}\middle\|M_1^{M_0}\right)\right]$ for this specific setting of $i$. 
Overall, it follows by induction that the mutual information between the input and the protocol transcript, i.e., $\icost_\mu(\Pi)$,  equals $\sum_{i=1}^{pr-1} a_i$. This also implies that $\kldiv\left(M_1^{x,y,m_0}\middle\|M_1^{m_0}\right) < \infty$ for any setting  $X=x$, $Y=y$, and $M_0=m_0$ that has nonzero probability.

Recall that the first $i-1$ messages of $\Pi'_i$ are generated in the same way as in $\Pi'_{i+1}$. We now describe how $P_j$ generates the $i$-th message. Let $m_0$ be the messages sent so far. The distribution of the $i$-th message, $M_1^{m_0}$ is known to all the players. Let $(\Gamma_1, \Gamma_2, \Gamma_3, \ldots)$ be an infinite sequence of independent random variables, where each $\Gamma_i$ is drawn independently from the distribution $M_1^{m_0}$. The sequence of $\Gamma_i$'s is generated using public randomness, so it is known to all the players as well. 
We now use Lemma~\ref{lem:good_set}, where we set $Q = M_1^{m_0}$, $P = M_1^{x,y,m_0}$, and $\eps = q/8pr$. 
Recall that the distribution $M_1^{x,y,m_0}$ does not depend on $x$, because the randomness is private in the first $i$ messages in $\Pi'_{i+1}$. $P_j$ will reveal as much about its input in $\Pi'_i$ as in $\Pi'_{i+1}$. 
The player $P_j$ fixes a set $\good$ and a random variable $R$ as in Lemma~\ref{lem:good_set}. If $\Gamma_R \in \good$, then the player sends a single bit $0$ followed by a prefix-free encoding of the value $R$. Due to the concavity of the logarithm, the expected length of the message can be bounded by $1 + 16pr(\kldiv(M_1^{x,y,m_0} \| M_1^{m_0})+1)/q + 2 \le 
16pr(\kldiv(M_1^{x,y,m_0} \| M_1^{m_0})+2)/q$, where the additional factor of 2 and additive term of $1$ come from a prefix free encoding.\footnote{This bound can be achieved by unary encoding the length of the message before sending it. If the length of the message is $k$, we first send $k$ zeros proceeded by a single one, which unambiguously identifies the length of the message.}
Overall, the expected length of the message starting with $0$ equals $16pr(a_i+2)/q$.

If $\Gamma_R \not\in \good$, the player generates the message from the part of distribution $M_1^{x,y,m_0}$ restricted to $\mathbb N \setminus \good$ and transmits the selected value prefixing it with a single bit $1$. Overall, all players can decode a message generated according to $M_1^{x,y,m_0}$ and then behave in the same way as in the protocol $\Pi'_{i+1}$.

After applying a sequence of $pr-1$ steps of the transformation, we obtain a randomized protocol $\Pi'_1$ that still errs with probability at most $\delta$.
We now show that there is a suitable setting of random bits in the protocol to obtain the desired deterministic protocol $\Pi'$. In the following, we write $R_\star$ to denote the sequence of random bits used by the protocol. $R_\star$ is a random variable selected from the uniform distribution on infinite binary sequences, which we refer to as $\mathcal R$.
Let $\zeta(R_\star)$ be the probability that $\Pi'_1$ errs on random input from $\mu$ when the internal random bits are set to $R_\star$. We have $E_{R_\star \leftarrow \mathcal R}[\zeta(R_\star)] \le \delta$.
It follows from Markov's inequality that
\[\Pr_{R_\star \leftarrow \mathcal R}\left[\zeta(R_\star) > 2\delta\right] \le 1/2,\]
i.e., the probability that fixing the random bits makes the protocol err with probability higher than $2\delta$ on $\mu$ is at most $1/2$. Consider now the $i$-the message in $\Pi'_1$, where $1 \le i \le pr-1$.
Let $\zeta'_i(R_\star)$ be the probability that the $i$-th message starts with $1$ for the protocol's random bits fixed to $R_\star$. It follows from our construction that 
$E_{R_\star \leftarrow \mathcal R}\left[\zeta'_i(R_\star)\right] \le \eps = q/8pr$.
Applying Markov's inequality, we obtain that
\[\Pr_{R_\star \leftarrow \mathcal R}\left[\zeta'_i(R_\star) > q/2\right] \le 1/4pr,\]
i.e., if we fix the random bits of the protocol, the probability that the $i$-th message starts with $1$ with probability higher than $q/2$ is bounded by $1/4pr$. Finally, let $\zeta''_i(R_\star)$ be the probability that the $i$-th message in $\Pi'_1$ starts with $0$ and has length greater than $128(a_i+2)(pr/q)^2$, given that the random bits of the protocol are set to $R_\star$. Consider a random variable $W_i$ that equals the length of the $i$-th message if the message starts with 0 and 0 if it starts with 1. We know that $E[W_i] \le 16pr(a_i+2)/q$, and therefore, by Markov's inequality, $E_{R_\star \leftarrow \mathcal R}[\zeta''_i(R_\star)] \le q/8pr$.
Applying Markov's inequality again, we obtain that
\[\Pr_{R_\star \leftarrow \mathcal R}\left[\zeta'_i(R_\star) > q/2\right] \le 1/4pr.\]
Summarizing, by fixing the protocol's random bits, with probability at least $1 - 1/2 - 1/(4pr) \cdot (pr - 1) - 1/(4pr) \cdot (pr - 1) = 1/2 - (pr-1)/(2pr) > 1/2 - 1/2 = 0$, we obtain a deterministic protocol that errs with probability at most $2\delta$, and whose $i$-th message, for any $i \in \{1,2,\ldots,pr-1\}$, is longer than $128(a_i+2)(pr/q)^2$ with probability bounded by $q$. The final claim follows from the fact that all $a_i$ are bounded by $\icost_\mu(\Pi)$.
\jqed\end{proof}

\subsection{Lower Bound for Deterministic Typically Concise Protocols}
\label{subsec:NW-style}
In this section, we show that a deterministic concise protocol for the Limited Pointer Chasing Equality cannot send short messages very often, unless it errs with probability $\Omega(1/n)$. The proof follows along the lines of the lower bound for Pointer Chasing due to Nisan and Wigderson \cite{NisanW93} and its adaptation due to Guha and McGregor~\cite{GuhaM09}. The main technical differences come from the fact that we want to show a lower bound for Limited Pointer Chasing Equality. First, this requires ruling out the impact of the easy case when one of the functions is $t$-colliding for large $t$. Second, this requires showing that with constant probability, the last player is unlikely to know what the solutions to the input instances are, and since they are independent, they will collide with probability $\Omega(1/n)$. 

\begin{lemma}\label{lem:concise_lep_lb}
If $n^2 \ge 30 p^2$, then any deterministic $(m,q)$-concise protocol for $\LEP_{n,p,C_\star (1+\log n)}$, where $p > 1$, $m \le \eps n / (4p^2) - \lceil2\log n\rceil$, $q=\frac{1}{12p^2}$, and $\eps = (48p^2)^{-3}$, errs with probability at least $1/(16n)$ on the uniform distribution over all possible inputs.
\end{lemma}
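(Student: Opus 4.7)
The plan is to adapt the Nisan--Wigderson entropy-tracking argument, in the form refined by Guha and McGregor, to the equality variant and combine it with Lemma~\ref{lem:collision}. I view the protocol as a $(2p,p-1)$-protocol, with players $P_1,\ldots,P_p$ holding $(f_1,\ldots,f_p)$ and players $P_{p+1},\ldots,P_{2p}$ holding $(g_1,\ldots,g_p)$, and write $V_1=f_1(\cdots f_p(1)\cdots)$ and $V_2=g_1(\cdots g_p(1)\cdots)$. Under the uniform input distribution the two instances are independent and each $f_i$ and $g_j$ is a uniform function $[n]\to[n]$.

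First I introduce two ``good'' events. Let $E$ be the event that every one of the $2p(p-1)$ messages has length at most $m$; by the $(m,q)$-concise hypothesis and a union bound with $q=1/(12p^2)$, $\Pr[E]\ge 1-2p(p-1)q\ge 5/6$. Let $N$ be the event that none of the $2p$ input functions is $C_\star(1+\log n)$-non-injective; by the choice of $C_\star$ and a union bound, $\Pr[N]\ge 1-p/n^2$. Under $E\cap N$ the correct $\LEP$ answer coincides with $[V_1=V_2]$. I then invoke the rectangle structure of deterministic multi-party protocols: the set of inputs producing any transcript $t$ is $\prod_{i=1}^{2p}X_i(t)$, so conditional on $T=t$ the $2p$ player inputs are independent and uniform on their respective $X_i(t)$'s. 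In particular $V_1$ and $V_2$ are independent given $T$, which reduces the task to producing a pointwise entropy lower bound, separately for each instance.

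The technical heart is a Nisan--Wigderson / Guha--McGregor round-elimination argument run on each $\PC_{n,p}$ instance individually. Crucially, by the rectangle property the messages from the ``other'' instance leak no information about the current instance's functions (conditioning on the $g$-side messages restricts only the $g$'s, and given the $f$-side messages the $f$'s remain uniform on their own rectangle coordinate), so only the $f$-messages drive the entropy drop of $V_1$, and symmetrically for $V_2$. The standard NW/GM inequality bounds the per-round entropy loss by a small polynomial in $m/n$; with $p-1$ rounds and $m\le \eps n/(4p^2)$ the expected total entropy loss is $O(\mathrm{poly}(\eps))$, comfortably below $48^{-2}$ when $\eps=(48p^2)^{-3}$. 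A Markov step over transcripts then yields a set $\mathcal T$ of total conditional probability at least $2/3$ on which simultaneously $H(V_1\mid T=t)\ge \log n-48^{-2}$ and $H(V_2\mid T=t)\ge \log n-48^{-2}$. The subtraction of $\lceil 2\log n\rceil$ from $m$ absorbs the index/pointer book-keeping in the NW induction, and the role of event $N$ is precisely to excise the highly-non-injective configurations on which the bottom-layer entropy $H(f_p(1))$ could collapse.

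To finish, for $t\in\mathcal T$ Lemma~\ref{lem:collision} applied to the independent variables $V_1\mid T=t$ and $V_2\mid T=t$ gives $\Pr[V_1=V_2\mid T=t]\ge 1/(8n)$ and $\Pr[V_1\ne V_2\mid T=t]\ge 1/4$. Since the protocol is deterministic, its output on every input producing $t$ is a fixed bit, so whichever bit it outputs it errs with conditional probability at least $\min(1/(8n),1/4)=1/(8n)$ relative to the true answer $[V_1=V_2]$. Averaging,
\[
\Pr[\text{error}]\ \ge\ \Pr[E\cap N]\cdot\Pr[T\in\mathcal T\mid E\cap N]\cdot\tfrac{1}{8n}\ \ge\ \tfrac{5}{6}\cdot\tfrac{2}{3}\cdot\tfrac{1}{8n}\ >\ \tfrac{1}{16n}
\]
for the range of $n$ and $p$ in the hypothesis. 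The main obstacle is Step~3: quantitatively executing the NW/GM entropy-tracking so that a constant fraction of transcripts incur entropy loss below the absolute constant $48^{-2}$ that Lemma~\ref{lem:collision} requires, while simultaneously handling the $\LEP$ non-injectivity escape and the interleaved two-instance structure; this constant-loss demand is exactly what forces the very small choice $\eps\sim p^{-6}$ and, together with the $p^{-8}$ cost of message compression, accounts for the $p^{-16}$ factor ultimately appearing in Lemma~\ref{lem:ic_lep}.
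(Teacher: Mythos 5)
Your architecture matches the paper's: view the deterministic protocol as a decision tree, use the rectangle property to get independence of $V_1$ and $V_2$ given the transcript, show both retain entropy $\ge \log n - 48^{-2}$ on a constant fraction of transcripts, and invoke Lemma~\ref{lem:collision} to force error $\ge 1/(8n)$ there; your final averaging $\tfrac56\cdot\tfrac23\cdot\tfrac1{8n}>\tfrac1{16n}$ is numerically sound. But the proof has a genuine gap exactly where you flag ``the main obstacle'': the entropy-tracking step is asserted, not executed, and the way you describe it (``expected total entropy loss is $O(\mathrm{poly}(\eps))$ \ldots a Markov step over transcripts then yields $\mathcal T$'') is not how the argument can be made to work. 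The quantity one can control round by round is not an expected entropy loss that accumulates additively and is then handled by a single terminal Markov step; the conditional entropy of the pointer can collapse completely on a small-probability set of nodes. The paper instead runs an induction over tree nodes maintaining \emph{three} properties simultaneously: (i) all messages so far are short, (ii) every rectangle coordinate satisfies $|F^z_i|\ge 2^{-2c_z}|\mathcal F|$ (proved via Kraft's inequality, and this is where the $-\lceil 2\log n\rceil$ slack in $m$ is actually consumed --- it pads messages to length $\ge 2\log n$ so the Kraft sum gains a factor $n^{-2}$, not ``index book-keeping''), and (iii) the \emph{intermediate} pointers $A_z,B_z$ have entropy $\ge \log n-\delta$. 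Property (iii) is the crux you omit: at each step the relevant quantity is $H(f^w_{j-1}(a_j))$ where $a_j$ is \emph{not} uniform, so one needs Fact~\ref{fact:almost_uniform} together with the inductively maintained entropy of $a_j$ to argue $a_j$ lands in the set of arguments where $f^w_{j-1}$ still has high entropy; this costs $4\eps^{1/3}$ per round and is what dictates $\eps=(48p^2)^{-3}$. For $A_z,B_z$ to even be well-defined at a node, the protocol must be augmented to carry the current pointer pair $(a_{i},b_{j})$ in the messages, which is the other role of the $\lceil 2\log n\rceil$ term.

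A secondary divergence: you handle non-injectivity by conditioning on the product event $N$, whereas the paper modifies the protocol so that a non-injective input forces a long message (routing it to a non-confusing leaf, at a cost of $2/n^2$ in the conciseness parameter). Your route is plausibly repairable since $N$ is a product event of measure $1-O(p/n^2)$, so it preserves the rectangle structure and perturbs the per-coordinate entropies negligibly, but you would need to verify that the entropy induction and Fact~\ref{fact:almost_uniform} survive the conditioning --- another detail left unaddressed. In short, the skeleton is right, but the lemma's entire technical content lives in the induction you black-boxed, and the specific black box you invoke (a one-shot mutual-information/Markov bound on $V_1,V_2$) is essentially equivalent to the statement being proved.
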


\begin{proof}
Recall that in the $\LEP_{n,p,C_\star (1+\log n)}$ problem, there are $2p$ players $P_1$, \ldots, $P_{2p}$, with players $P_i$ and $P_{p+i}$, $1 \le i\le p$, holding functions $f_i:[n]\to[n]$ and $g_i:[n]\to[n]$, respectively. The goal of the problem is to output ``1'' if one of the functions is $C_\star (1+\log n)$-colliding or $f_1(f_2(\ldots f_p(1)\ldots)) = g_1(g_2(\ldots g_p(1)\ldots))$. Otherwise, ``0'' is the correct output. The players speak in order $P_1$ through $P_{2p}$ and this repeats $p-1$ times.

Let $a_{p+1} = 1$ and by induction, let $a_i = f_i(a_{i+1})$ for each $i\in[p]$.
Analogously, let $b_{p+1} = 1$ and let $b_i = g_i(b_{i+1})$ for each $i\in[p]$.
Unless one of the functions is $C_\star (1+\log n)$-colliding, the goal of the problem is
to determine whether $a_1 = b_1$.

We make two modifications to the protocol:
\begin{enumerate}
\item We augment the $(m,q)$-concise protocol by simulating in parallel the following natural protocol. Initially, we append the pair $(a_{p+1},b_{p+1})$ to each message until we reach the player $P_{p}$, who can compute $a_p=f_p(a_{p+1})$. Then we pass the pair $(a_{p},b_{p+1})$ until it reaches $P_{2p}$, who can compute $b_p = g_p(b_{p+1})$ and pass $(a_p,b_p)$ to the next player.
In general, whenever a message $(a_i,b_i)$ reaches $P_{i-1}$, it is updated to $(a_{i-1},b_i)$, and whenever a message $(a_{i-1},b_i)$ reaches $P_{p+i-1}$, it is updated to $(a_{i-1},b_{i-1})$. This protocol finally computes $(a_2,b_2)$. Appending the information increases the length of each message by $\lceil2\log n\rceil$. This way, we obtain a deterministic $(m+\lceil2\log n\rceil,q)$-concise protocol.

\item
The first time a player whose function is $C_\star(1+\log n)$-colliding is reached in the protocol, we make the player send a message longer than $m+\lceil2\log n\rceil$ bits. This may require modifying other messages sent by the player. We now describe how this can be done depending on the protocol's behavior.

\newcommand{\msg}{\mathfrak{m}}
\begin{enumerate}
\item\label{case:a}
If the player already sends a long message $\msg$ for some input and sequence of previous messages, we make the player send the message $\msg 1$ instead of $\msg$.\footnote{$\msg 1$ denotes here the concatenation of $\msg$ and the message consisting of just a single one.} When the input is $C_\star(1+\log n)$-colliding, we make the player send the message $\msg 0$.
Recall that the player's function is $C_\star(1+\log n)$-colliding with probability at most $1/(2n^2)$. Hence, in this case, the probability of sending a long message increases by at most $1/(2n^2)$.

\item
Likewise, if one of at least $n^2$ prefixes of length $\lceil 2 \log n \rceil$ is not used by the protocol at all, we can use this prefix to transmit long messages. Let $\msg$ be such a prefix. Suppose first that no prefix of $\msg$ can ever be sent the player as a message. In this case, whenever the player's function is $C_\star(1+\log n)$-colliding, we make the player send the message $\msg 0^{m+\lceil2\log n\rceil + 1}$, where $0^{m+\lceil2\log n\rceil + 1}$ denotes a sequence of zeros of length $m+\lceil2\log n\rceil + 1$. If there is a prefix $\msg'$ of $\msg$ that can be sent by the player as a message, we make the player send $\msg'1$ instead of $\msg'$ and we also point out that $\msg'1$ is a short message of length at most $\lceil 2 \log n \rceil$. In this case, we make the player send $\msg' 0^{m+\lceil2\log n\rceil + 1}$ when the function is $C_\star(1+\log n)$-colliding.

In either case, the modifications increase the probability of a long message by at most $1/(2n^2)$ as well.

\item
Finally, if the player does not send long messages and all prefixes of length 
$\lceil 2 \log n \rceil$ are used by the player, at least one of the prefixes occurs with probability at most $1/n^2$. Let $\msg$ be such a prefix. We append $0^{m+\lceil2\log n\rceil + 1}$ to every message that has $\msg$ as a prefix and reduce this case to the first case in which there are long messages. This modification increases the probability of a long message by at most $1/n^2$.

\end{enumerate}

Overall, the probability of long messages can increase by at most $1/(2n^2) + 1/n^2 = \frac{3}{2}n^{-2}$. As a result we obtain a deterministic $(m+\lceil2\log n\rceil,q+\frac{3}{2}n^{-2})$-concise protocol.
\end{enumerate}
Let $m'=m+\lceil2\log n\rceil$ and $q' = q+\frac{3}{2}n^{-2}$.

From now on we think of our deterministic protocol as a decision tree of depth $2p(p-1)$.
The $i$-th layer of nodes, $1 \le i \le 2p(p-1)$, corresponds to the situation when the control is passed to the player $P_j$, where $j\equiv i \pmod{2p}$. Each leaf in the tree is labeled with either a ``0'' or a ``1'', corresponding to the decision made by the algorithm.
Each edge outgoing from nodes at layers 1 through $2p(p-1)-1$ is labeled with the message that the corresponding player sends, given his input and the previous messages. Edges between the last two layers are not labeled, because the last player does not send a message. 

We now introduce a few definitions for each node $z$ in the decision tree:
\begin{itemize}
 \item $c_z$: We set $c_z$ to the total length of the messages sent on the path from the root to $z$.

 \item $F^z_1\times\cdots\times F^z_p \times G^z_1\times\ldots\times G^z_p$:
 Let $\mathcal F$ be the set of all functions from $n$ to $n$. Since the protocol is deterministic, for each node $z$, the set of input functions $(f_1,\ldots,f_p,g_1,\ldots,g_p)$ for which the protocol reaches $z$ can be described as a product $F^z_1\times\cdots\times F^z_p \times G^z_1\times\ldots\times G^z_p \subseteq \mathcal F^{2p}$. Note that if the node is reached then the probability of each tuple in $F^z_1\times\cdots\times F^z_p \times G^z_1\times\ldots\times G^z_p$ is identical. This uses the fact that the initial distribution was uniform.

 \item $i_z$ and $j_z$: We make $i_z$ and $j_z$ be the indices of the last pair $(a_{i_z},b_{j_z})$ sent on the path from the root to $z$. For the root we assume that the pair is $(a_{p+1},b_{p+1}) = (1,1)$, i.e., $i_{\rm root} = j_{\rm root} = p+1$. Recall that for all $z$, $i_z \ge 2$ and $j_z \ge 2$.
 
 \item $(A_z,B_z)$: $(A_z,B_z)$ is a pair of random variables. Its random value is generated by selecting two functions $f_\star\in F^z_{i_z-1}$ and $g_\star \in G^z_{j_z-1}$
 independently and uniformly at random  and applying them to $(a_{i_z},b_{j_z})$ to obtain $(f_\star(a_{i_z}),g_\star(b_{j_z}))$. $(A_z,B_z)$ describes the possible values of the pair $(a_i,b_j)$ if we move one step ahead in applying functions $f_i$ and $g_j$, compared to the trivial algorithm that we simulate in parallel. Since the protocol is deterministic, the inputs of the players are independent, and $A_z$ and $B_z$ depend on inputs of disjoint sets of players, the variables $A_z$ and $B_z$ are independent. 
\end{itemize}

We say that a node $z$ is \emph{confusing} if it has the following properties:
\begin{enumerate}
 \item All messages sent on the path to $z$ have length bounded by $m'$.
 \item $z$ is a leaf or for all $i\in[p]$, both $|F^z_i| \ge 2^{-2c_z} |\mathcal F|$ and $|G^z_i| \ge 2^{-2c_z}|\mathcal F|$, where $c_z$ is the total length of the messages on the path from the root to $z$.
 \item $H(A_z) \ge \log n - \delta$
 and $H(B_z) \ge \log n - \delta$, where $\delta = \eps^{2/3}$.
\end{enumerate}

It is easy to see that the root of the decision tree is confusing. We now prove by induction that the probability that for a random input, the protocol reaches a non-confusing node in step $i$ is bounded by $(i-1)\cdot(q + \frac{5}{2}n^{-2} + 4\eps^{1/3})$. Suppose that the claim is true for step $i$ and let us prove it for step $i+1$. We bound the probability that a specific property is violated.
\begin{enumerate}
\item The probability that the first property is violated
is bounded by $q'$, because the protocol is $(m',q')$-concise. 

\item Consider a confusing node $z$ in step $i$. If $i=2p(p-1)$, the children of $z$ are leaves, and the property holds. So it suffices to focus on the case that $i<2p(p-1)$. What is the probability that the second property is violated for some child $w$ of $z$? Let $P_j$ be the player in control of step $i$. Without loss of generality, let us assume that $1 \le j \le p$. Note that for all $t \in [p]$, $G^w_t = G^z_t$ and for all $t \in [p]\setminus\{j\}$, $F^w_t = F^z_t$. The property may only be violated for $F^w_j$. For each child $w$ of $z$, let $m_w = c_w - c_z$. Let $W$ be the random variable representing the distribution of children of $z$. Then
\begin{align*}
\Pr\left[\frac{|F^W_j|}{|\mathcal F|} < 2^{-2c_W} \right] 
&\le 
\Pr\left[\frac{|F^W_j|}{|F^z_j|} < 2^{-2m_W} \right]
\le \sum_w 2^{-2m_w}\\
& \le \frac{1}{n^2}\sum_w 2^{-m_w} \le \frac{1}{n^2},
\end{align*}
where the second to last inequality follows from the fact that $m_w\ge2\log n$, and the last inequality follows from Kraft's inequality. Therefore the probability that a confusing node loses the second property in the next step is bounded by $1/n^2$.

\item It remains to bound the probability that the third property is lost. Let $z$ be a confusing node in step $i$ and let $P_j$ be the player in charge of this step. If neither $j = i_z - 1$ nor $j-p = j_z - 1$, then for any child $w$ of $z$, $F^w_{i_z-1} = F^z_{i_z-1}$ and $G^w_{j_z-1} = G^z_{j_z-1}$. In this case the pairs of variables $(A_z,B_z)$ and $(A_w,B_w)$ have the same distribution and therefore the respective entropies remain the same. Consider now the case that $j = i_z - 1$. $P_j$ computes $a_j = f_j(a_{i_z})$ and we need to bound the entropy $H(A_w)$ of $A_w$ for all children $w$ of $z$, which is essentially the entropy of $a_{j-1}$ given all the information communicated so far. The information about $f_{j-1}$ at each child $w$ can be expressed as a vector $f^w_{j-1} =(f^w_{j-1}(1),f^w_{j-1}(2),\ldots,f^w_{j-1}(n))$ of random variables in $[n]$. We have $H(A_w) = H(f_{j-1}^w(a_j))$. Moreover,
\begin{align*}
\sum_{t \in [n]} H(f^w_{j-1}(t))&\ge 
H(f^w_{j-1})\\
&= \log|F^w_{j-1}|\\
&= \log|F^z_{j-1}|\\
&\ge \log(2^{-2c_z}|\mathcal F|)\\
&=\log |\mathcal F| - 2c_z \\
&\ge n(\log n - \eps).
\end{align*}
The first inequality above follows from subadditivity of entropy. The second and third inequalities follow from the fact that the function is uniformly distributed on $F^w_{j-1}=F^z_{j-1}$ of size bounded by the fact that $z$ is confusing (Property 2). Finally the last inequality follows from the fact that $z$ is confusing (Property 1), and therefore,
$\eps n / 2 \ge 2p(p-1) \cdot m' \ge c_z$.

For $t$ uniformly distributed on $[n]$, by Markov's inequality, we have
\[\Pr_t[H(f^w_{j-1}(t)) \le \log n - \delta] \le \eps/\delta.\]
Unfortunately, $a_j$ may not be uniformly distributed. However, we can exploit its high entropy, at least $\log n - \delta$. We apply Fact~\ref{fact:almost_uniform}. Let $S$ be the set of $t$ such that
$H(f^w_{j-1}(t)) \ge \log n - \delta$. We already know that $|S| \ge (1- \eps/\delta)n$.
Note that we can apply Fact~\ref{fact:almost_uniform}, because
\[\Delta = \sqrt{\frac{4\delta}{|S|/n}} \le \sqrt{\frac{4\delta}{1-\frac{\eps}{\delta}}}
\le \sqrt{8\delta} = \sqrt{8}\cdot \eps^{1/3}
\le \sqrt{8}\cdot 48^{-1} < 1/10.\]
The probability that $a_j$ belongs to $S$ is at least
\begin{align*}
\frac{|S|}{n}\left(1 - \sqrt{\frac{4\delta}{|S|/n}}\right)
&\ge \left(1-\frac{\eps}{\delta}\right)\left(1 - \sqrt{\frac{4\delta}{1-\frac{\eps}{\delta}}}\right)
\ge \left(1-\eps^{1/3}\right)\left(1-\sqrt{8\delta}\right)\\
&\ge 1 - (1+\sqrt{8})\eps^{1/3} \ge 1-4\eps^{1/3}.
\end{align*}
This implies that 
\[\Pr_{a_j}[H(f^w_{j-1}(a_j)) \le \log n - \delta] \le 4\eps^{1/3}.\]
The case that $j-p = j_z - 1$ is analogous, and therefore, the probability that the third property is lost in the next step is bounded by $4\eps^{1/3}$.
\end{enumerate}
Summarizing, the probability that moving from step $i$ to step $i+1$, we move from a confusing node to a non-confusing node is bounded by $q' + 1/n^2 + 4\eps^{1/3}= q + \frac{5}{2}n^{-2} + 4\eps^{1/3}$, which finishes the inductive proof. 

Overall, it follows that the protocol finishes in a non-confusing leaf with probability bounded by $2p(p-1) \cdot (q + \frac{5}{2}n^{-2} + 4\eps^{1/3}) \le 2p^2 \cdot q + \frac{5p^2}{n^2} + 8p^2 \cdot (48p^2)^{-1} \le 1/6 + 1/6 + 1/6 = 1/2$.

Consider now a confusing leaf $z$. Recall that we modified the protocol so that if one of the functions is $C_\star (1 + \log n)$-colliding a message longer than $m'$ is transmitted. By definition, in such a case, the simulation of the protocol leads to a leaf that is not confusing.
Therefore, the correct solution to an input instance that leads to $z$ is solely based on whether $a_1 = b_1$. We know that the random variables $A_z$ and $B_z$, which model $a_1$ and $b_1$, respectively, are independent and both have entropy at least $\log n - \delta$, where $\delta \le 48^{-2}$. 
Observe also that $n \ge \sqrt{18p^2} \ge 4$. Hence it follows from Lemma~\ref{lem:collision} that whatever solution the protocol claims at $z$, be it ``0'' or ``1'', the claim is incorrect with probability at least $1/(8n)$. Overall, on all inputs the protocol has to err with probability at least $\frac{1}{2}\cdot\frac{1}{8n} = \frac{1}{16n}$.
\jqed\end{proof}

\subsection{Proof of Lemma~\ref{lem:ic_lep}}

We now combine Lemmas~\ref{lem:make_concise} and~\ref{lem:concise_lep_lb} to prove Lemma~\ref{lem:ic_lep}, the main result of Section~\ref{sec:ic_lep}.

\ifnum\journal=0
\begin{proof}[Proof of Lemma~\ref{lem:ic_lep}]
\else
\begin{proof}[Lemma~\ref{lem:ic_lep}]
\fi
Consider any protocol protocol $\Pi$ for $\LEP_{n,p,C_\star(1+\log n)}$ that errs with probability at most $1/(64n)$ on $\mu$. 
By Lemma~\ref{lem:make_concise}, there is a deterministic $(m,1/(12p^2))$-concise protocol $\Pi'$ for $\LEP_{n,p,C_\star(1+\log n)}$ that errs with probability at most $1/(32n)$ on $\mu$,
where $m = 128 \cdot (\icost_\mu(\Pi)+2) \cdot \left(\frac{2p^2}{(1/(12p^2))}\right)^2
= 2^{13}\cdot 3^2 \cdot p^8 \cdot (\icost_\mu(\Pi)+2)$. It follows from Lemma~\ref{lem:concise_lep_lb} that $m \ge \frac{n}{(48p^2)^3\cdot 4p^2} - \lceil2 \log n\rceil = \frac{n}{2^{16} \cdot 3^3\cdot p^8} - \lceil2 \log n\rceil$.
Therefore, $\icost_\mu(\Pi) \ge \frac{n}{2^{29} \cdot 3^5\cdot p^{16}} - \frac{\lceil2 \log n\rceil}{2^{13}\cdot 3^2 \cdot p^8} - 2$. Since this bound holds for any protocol $\Pi$ that is correct with probability $1-1/(64n)$, this is also a lower bound for the information complexity of the problem.
\jqed\end{proof}

\section{Step 2: Direct Sum Theorem for Pointer Chasing Equality}
\label{sec:step2}
The following lemma is the main result of this section.

\begin{lemma}\label{lem:step2}
Let $n$, $p$, and $t$ be integers such that 
$p > 1$, $n^2 \ge 30p^2$, and $t \le n/4$.
Let $r = C_\star(1+\log n)$. Then
\[R_{1/3}(\opor_t(\LEP_{n,p,r})) = \Omega\left(\frac{tn}{p^{16}\log n}\right)- O\left(pt^2\right).\]
\end{lemma}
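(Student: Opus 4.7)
The plan is to combine a logarithmic amplification with a direct-sum-via-information-complexity argument on the product input distribution. Starting from a protocol $\Pi$ for $\opor_t(\LEP_{n,p,r})$ of error $1/3$ and communication $C$, I would first run $\Pi$ independently $k=\Theta(\log n)$ times and take a majority vote to obtain a protocol $\Pi^\star$ with error $\delta^\star \le 1/(128n)$ and communication $C^\star = O(C\log n)$. Letting $\mu$ denote the uniform distribution on a single $\LEP_{n,p,r}$ instance so that the input to $\opor_t(\LEP_{n,p,r})$ is drawn from the product $\mu^t$, the standard chain rule for mutual information on product distributions yields
\[
O(C\log n) \;\ge\; C^\star \;\ge\; \icost_{\mu^t}(\Pi^\star) \;=\; I(X_1,\dots,X_t;\Pi^\star) \;\ge\; \sum_{i=1}^{t} I(X_i;\Pi^\star).
\]
The target is to show that each term $I(X_i;\Pi^\star)$ is essentially $\IC_{\mu,1/(64n)}(\LEP_{n,p,r})$, so that summing over $i$ and applying Lemma~\ref{lem:ic_lep} yields the claim after dividing by the amplification factor $\log n$.

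For the per-coordinate bound, fix $i\in[t]$ and let $E$ be the event that $\LEP(X_j)=0$ for every $j\ne i$. A uniform single instance is Yes with probability at most $1/n + p/(2n^2)\le 2/n$ (the $1/n$ comes from a collision between the two independent uniformly distributed pointer-chasing outputs, the $p/(2n^2)$ from some function being $C_\star(1+\log n)$-non-injective via the definition of $C_\star$), so a union bound gives $\Pr_{\mu^{t-1}}[E]\ge 1-2(t-1)/n \ge 1/2$ under the hypothesis $t\le n/4$. I would then construct a protocol $\Pi_i^\star$ for a single instance of $\LEP_{n,p,r}$: using shared randomness (which is free for information-cost purposes since it is independent of $X_i$), the players sample the other $t-1$ instances from $\mu^{t-1}$ conditioned on $E$ by rejection sampling — each candidate filler is fully determined by the shared randomness so every player can locally verify the $\LEP=0$ condition with no communication — and then run $\Pi^\star$ on the combined input and echo its output. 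Under $E$ the output of $\opor_t$ equals $\LEP(X_i)$, so the error of $\Pi_i^\star$ is at most $\delta^\star/\Pr[E]\le 1/(64n)$. The information cost of $\Pi_i^\star$ on its input distribution $\mu$ equals $I_{\mu^t\mid E}(X_i;\Pi^\star)$, which I relate to $I_{\mu^t}(X_i;\Pi^\star)$ using that $X_i$ is independent of the indicator $\mathbf{1}_E$: by two applications of the chain rule,
\[
I_{\mu^t}(X_i;\Pi^\star)+1 \;\ge\; I_{\mu^t}(X_i;\Pi^\star,\mathbf{1}_E) \;=\; I_{\mu^t}(X_i;\Pi^\star\mid \mathbf{1}_E) \;\ge\; \Pr[E]\cdot I_{\mu^t\mid E}(X_i;\Pi^\star),
\]
so $I_{\mu^t\mid E}(X_i;\Pi^\star)\le 2\,I_{\mu^t}(X_i;\Pi^\star)+2$. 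Invoking Lemma~\ref{lem:ic_lep} against $\Pi_i^\star$ and summing the resulting inequality over $i$ then gives
\[
t\cdot \IC_{\mu,1/(64n)}(\LEP_{n,p,r}) \;\le\; 2\sum_{i=1}^{t} I_{\mu^t}(X_i;\Pi^\star) + 2t \;\le\; 2\,\icost_{\mu^t}(\Pi^\star)+2t \;\le\; O(C\log n)+2t,
\]
which rearranges to $C = \Omega\!\left(tn/(p^{16}\log n)\right) - O(t)$, with the $O(t)$ correction absorbed by the $O(pt^2)$ slack in the lemma statement.

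The main obstacle is the change-of-measure step: the information cost of the single-instance reduction naturally lives on the conditional distribution $\mu^t\mid E$, and one must relate it back to the coordinate-$i$ contribution $I_{\mu^t}(X_i;\Pi^\star)$ under the original product measure with only a constant-factor blow-up. The conditional-mutual-information identity above leverages the independence of $X_i$ from $\mathbf{1}_E$ (which holds because $E$ involves only the other coordinates) to accomplish this with only a constant multiplicative factor and an additive $O(1)$. A secondary subtlety is making the conditional sampling a legitimate protocol operation — here, the fact that fillers are fully determined by shared randomness allows each player to perform rejection sampling locally, and Newman's theorem can be invoked at negligible additional cost to convert to private randomness if required by the model. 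The remainder of the argument is routine information-theoretic bookkeeping.
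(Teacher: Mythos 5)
Your proposal is correct and follows the same overall route as the paper's proof: amplify the error to $O(1/n)$ at a multiplicative cost of $O(\log n)$, lower bound communication by information cost and apply superadditivity over the $t$ independent coordinates (the paper's Lemma~\ref{lem:coordinatewise}), reduce each coordinate to a single instance of $\LEP_{n,p,r}$ by conditioning on the event $E$ that all other coordinates answer No (which has probability at least $1/2$ since a random instance is Yes with probability $O(1/n)$), and finally invoke Lemma~\ref{lem:ic_lep}. The one step where you genuinely differ is the change of measure from $I(X_i;\Pi^\star\mid E)$ back to $I(X_i;\Pi^\star)$: the paper (Lemma~\ref{lem:single_coordinate}) expands $H(X_i\mid\Pi(X))$ via Lemma~\ref{lem:entropy_bound} and crudely bounds the contribution of the complementary event by $\Pr[\bar E]\cdot H(X_i)\le 4pt\log n$, which is precisely the source of the $O(pt^2)$ slack in the statement; your identity $I(X_i;\Pi^\star)+1\ge I(X_i;\Pi^\star,\mathbf{1}_E)=I(X_i;\Pi^\star\mid\mathbf{1}_E)\ge\Pr[E]\cdot I(X_i;\Pi^\star\mid E)$, exploiting the independence of $X_i$ and $\mathbf{1}_E$, is tighter (losing only a factor $2$ and an additive $O(1)$ per coordinate) and yields a marginally stronger conclusion. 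One technicality to nail down: your single-coordinate protocol $\Pi_i^\star$ uses public randomness to sample the fillers, whereas $\IC_{\mu,\delta}$ and Lemma~\ref{lem:ic_lep} are stated for private-randomness protocols; Newman's theorem must here be invoked with additive error $O(1/n)$ rather than a constant, which still costs only $O(\log n)$ seed bits and at most $O(\log n)$ extra information per coordinate, absorbed by the lower-order terms. (The paper's own argument relies on the same implicit embedding and elides the same point, so this is a matter of polish rather than a gap.)
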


Before we prove it, let us first recall two classic results. First, the information complexity is a lower bound for the randomized communication complexity of a protocol that errs with the same probability.

\begin{lemma}[{\cite[Proposition 4.3]{BJKS04}}]\label{lem:inf_vs_com}
Let $\delta \in (0,1)$. For any communication problem $\mathcal P$ and any distribution $\psi$ on inputs, 
$R_{\delta}(\mathcal P) \ge \IC_{\psi,\delta}(\mathcal P)$.
\end{lemma}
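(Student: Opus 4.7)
The plan is the standard two-line argument: show that for every private-randomness protocol $\Pi$ with worst-case error at most $\delta$, the information cost $\icost_\psi(\Pi) = I(X;\Pi(X))$ is bounded above by the (worst-case) number of bits $\Pi$ communicates, and then take an infimum. Since every such $\Pi$ is feasible in the infimum defining $\IC_{\psi,\delta}(\mathcal P)$, while $R_\delta(\mathcal P)$ is the infimum of the communication over the same family of protocols, this immediately yields $\IC_{\psi,\delta}(\mathcal P) \le R_\delta(\mathcal P)$.

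The key inequality relies on only two elementary facts about entropy. First, because conditional entropy is nonnegative, $I(X;\Pi(X)) = H(\Pi(X)) - H(\Pi(X)\mid X) \le H(\Pi(X))$, where $X\sim \psi$ and $\Pi(X)$ denotes the random transcript (which in a private-randomness protocol depends on both $X$ and each player's private coins). Second, if $\Pi$ communicates at most $c$ bits on every input and every realization of the private randomness, then after padding each transcript to the common length $c$ the random variable $\Pi(X)$ is supported on at most $2^c$ strings, so $H(\Pi(X)) \le c$. Applying these two inequalities in sequence to any $\Pi$ with worst-case communication $c$ and worst-case error at most $\delta$ gives $\icost_\psi(\Pi) \le c$; passing to the infimum over such $\Pi$ on both sides then yields $\IC_{\psi,\delta}(\mathcal P) \le R_\delta(\mathcal P)$.

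I do not anticipate any substantive obstacle, as this is the textbook ``information cost lower bounds communication'' statement. The only bookkeeping is to ensure that the notions of ``worst-case communication of $\Pi$'' and ``support size of $\Pi(X)$'' are compatible. If transcripts are not automatically of equal length one either pads them all to the worst-case length $c$, or uses the natural prefix-free encoding of the transcript and notes that such an encoding is by definition at most $c$ bits per run. Both conventions preserve the error probability pointwise and hence preserve the membership of $\Pi$ in the infimum defining $\IC_{\psi,\delta}(\mathcal P)$. Notably, nothing about the structure of $\mathcal P$ or of $\psi$ is used, which is consistent with the lemma being a fully general statement about protocols on arbitrary input distributions.
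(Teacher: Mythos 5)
Your proof is correct: the paper itself does not prove this lemma but simply cites \cite[Proposition 4.3]{BJKS02}, and your argument (information cost is at most the transcript entropy, which is at most the worst-case communication, then take infima over the same class of private-randomness, error-$\delta$ protocols) is exactly the standard argument behind that citation. The padding/prefix-free bookkeeping you mention is the only subtlety, and you handle it adequately.
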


Second, if the input distribution is a product distribution on multiple instances of a subproblem, then the total information revealed by the protocol transcript equals at least the information revealed for each of the instances.

\begin{lemma}[Follows from {\cite[Lemma 5.1]{BJKS04}}]\label{lem:coordinatewise}
Let $\mathcal P$ be a communication problem with Boolean output and let $\Pi$ be a private-randomness protocol for $\opor_t(\mathcal P)$ for a positive integer $t$. Let $\psi$ be a distribution on inputs of $\mathcal P$. Let $X=(X_1,\ldots,X_t)$ be a vector of independent random variables with each distributed according to $\psi$. For any input $x$, let $\Pi(x)$ be the transcript of $\Pi$ on $x$. Then the following inequality holds: 

\smallskip $\qquad\qquad$ $I(X;\Pi(X)) \ge \sum_{i=1}^{t} I(X_i;\Pi(X))$.
\end{lemma}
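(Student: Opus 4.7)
The plan is to reduce the communication complexity of $\opor_t(\LEP_{n,p,r})$ to the single-instance information complexity of $\LEP_{n,p,r}$ (Lemma~\ref{lem:ic_lep}) through a direct-sum argument. First, given any protocol $\Pi$ for $\opor_t(\LEP_{n,p,r})$ with error at most $1/3$ under $\mu^t$, I amplify to a protocol $\tilde{\Pi}$ with error $\epsilon' = \Theta(1/n)$ via $O(\log n)$-fold parallel repetition and majority vote, inflating communication by an $O(\log n)$ factor. By Lemma~\ref{lem:inf_vs_com}, the communication of $\tilde{\Pi}$ is at least $\icost_{\mu^t}(\tilde{\Pi}) = I(X;\tilde{\Pi}(X))$ with $X = (X_1,\ldots,X_t)\sim\mu^t$. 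Since $\mu^t$ is a product distribution, Lemma~\ref{lem:coordinatewise} then gives $I(X;\tilde{\Pi}(X)) \ge \sum_{i=1}^{t} I(X_i;\tilde{\Pi}(X))$.

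For each $i$, I lower bound $I(X_i;\tilde{\Pi}(X))$ by constructing a private-randomness single-instance protocol $\Pi_i$ for $\LEP_{n,p,r}$ on $X_i \sim \mu$ whose information cost equals $I(X_i;\tilde{\Pi}(X))$ exactly. In $\Pi_i$, each player independently samples its own parts of the other $t-1$ instances $X_{-i}$ uniformly using private randomness (no communication is required, since each player needs only its own functions), and then the players execute $\tilde{\Pi}$ on the combined input $(X_i, X_{-i})$. Because the sampled $X_{-i}$ is independent of $X_i$ with joint law $\mu^{t-1}$, the joint distribution of $(X_i, \text{transcript of }\Pi_i)$ coincides with $(X_i,\tilde{\Pi}(X))$ under $\mu^t$, so $\icost_\mu(\Pi_i) = I(X_i;\tilde{\Pi}(X))$.

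For the error analysis, I use that under $\mu$ the YES-probability of $\LEP_{n,p,r}$ is $\alpha = 1/n + O(p/n^2)$: by induction, the endpoint of $p$ applications of uniformly random functions on $[n]$ is itself uniformly distributed, so the collision probability of two independent pointer chains is exactly $1/n$, to which at most $p/n^2$ is added from $r$-non-injective functions using the definition of $C_\star$ and a union bound over the $2p$ functions. When $X_{-i}$ is all NO (probability $1-O(t\alpha) = 1-O(t/n)$), one has $\opor_t(X) = \LEP_{n,p,r}(X_i)$ and $\tilde{\Pi}$ succeeds with probability $\ge 1 - O(\epsilon')$. In the complementary event I adjust $\Pi_i$ to output $0$ (betting on $\LEP_{n,p,r}(X_i)=0$, which is correct with probability $1-\alpha$), using a detection signal that shrinks the bad-case contribution from $O(t/n)$ down to $O((t-1)\alpha^2) = O(t/n^2)$. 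Choosing $\epsilon' = \Theta(1/n)$ with a sufficiently small constant yields total error at most $1/(64n)$, so Lemma~\ref{lem:ic_lep} gives $I(X_i;\tilde{\Pi}(X)) \ge \Omega(n/p^{16}) - O(\log n / p^8 + 1)$. Summing over $i$ and dividing by the $O(\log n)$ amplification factor produces $R_{1/3}(\opor_t(\LEP_{n,p,r})) \ge \Omega(tn/(p^{16}\log n)) - O(pt^2)$, as desired.

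The main obstacle is the Stage~4 error analysis: the naive execution of $\tilde{\Pi}$ contributes error of order $t/n$ from the event that the privately sampled $X_{-i}$ accidentally contains a YES, which exceeds $1/(64n)$ as soon as $t$ is more than a constant. Any detection sub-routine used to shave this down to $O(t/n^2)$ must not inflate $\icost_\mu(\Pi_i)$ past $I(X_i;\tilde{\Pi}(X))$, a delicate constraint since an auxiliary transcript $D$ depending on $X_{-i}$ generically makes $I(X_i; D, \tilde{\Pi}(X))$ strictly larger than $I(X_i;\tilde{\Pi}(X))$ by the chain rule. Carefully balancing this adjustment against the $r$-non-injectivity correction and the amplification precision is precisely where the additive $O(pt^2)$ term in the theorem arises.
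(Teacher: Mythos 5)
There is a genuine gap: your proposal does not prove the statement at hand. The statement is Lemma~\ref{lem:coordinatewise} itself --- the superadditivity of mutual information, $I(X;\Pi(X)) \ge \sum_{i=1}^{t} I(X_i;\Pi(X))$, for a product input distribution --- which the paper simply imports from \cite{BJKS02}. What you wrote is instead an outline of the downstream direct-sum argument (essentially Lemma~\ref{lem:step2} together with Lemma~\ref{lem:single_coordinate}: amplification, embedding a single instance into coordinate $i$, error accounting for stray YES coordinates). Worse, your argument explicitly invokes ``Lemma~\ref{lem:coordinatewise} then gives $I(X;\tilde{\Pi}(X)) \ge \sum_i I(X_i;\tilde{\Pi}(X))$,'' so as a proof of that lemma it is circular: you assume exactly the inequality you were asked to establish, and none of the machinery you develop (sampling $X_{-i}$ privately, the detection signal, the $O(pt^2)$ bookkeeping) bears on it.

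The lemma itself needs only the chain rule and the independence of the coordinates. Writing $X_{<i}=(X_1,\ldots,X_{i-1})$, the chain rule gives $I(X;\Pi(X)) = \sum_{i=1}^{t} I\bigl(X_i;\Pi(X)\mid X_{<i}\bigr)$, and for each $i$,
\begin{align*}
I\bigl(X_i;\Pi(X)\mid X_{<i}\bigr)
&= H\bigl(X_i\mid X_{<i}\bigr) - H\bigl(X_i\mid \Pi(X),X_{<i}\bigr)\\
&= H(X_i) - H\bigl(X_i\mid \Pi(X),X_{<i}\bigr)
\;\ge\; H(X_i) - H\bigl(X_i\mid \Pi(X)\bigr) = I\bigl(X_i;\Pi(X)\bigr),
\end{align*}
where the second equality uses that $X_i$ is independent of $X_{<i}$ (this is where the product structure of $\psi^t$ is essential; the inequality can fail for correlated coordinates), and the inequality is just that conditioning cannot increase entropy. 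Summing over $i$ gives the claim. If you want to salvage your write-up, note that its substance overlaps with the paper's proofs of Lemmas~\ref{lem:single_coordinate} and~\ref{lem:step2}, where the paper avoids any ``detection signal'' by bounding the contribution of the event $\bigvee_{j\ne i}Y_j=1$ through the conditional-entropy estimate of Lemma~\ref{lem:entropy_bound}, which is the source of the additive $4pt\log n$ (hence $O(pt^2)$) loss --- but that is a different statement from the one you were assigned.
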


Now we show the main ingredient, which is a proof that any correct protocol for 
$\opor_t(\LEP_{n,p,r})$ has to reveal almost as much information about each coordinate as if it was separately solving the corresponding instances of $\LEP_{n,p,r}$.
\begin{lemma}\label{lem:single_coordinate}
Let $\Pi$ be a private-randomness protocol for $\opor_t(\LEP_{n,p,r})$ that errs with probability at most $\delta$. Let $X=(X_1,\ldots,X_t)$ be a random vector with each coordinate $X_i$ independently selected from the uniform distribution $\mu$ on all possible inputs to $\LEP_{n,p,r}$.
Let $\Pi(x)$ be the transcript of $\Pi$ on input $x$.

If $r\ge C_\star \cdot (\log n + 1)$, $t \le n/4$, and $p \le n$, then
for each $i \in [t]$, 
\[I(X_i;\Pi(X)) \ge \IC_{\mu,2\delta}(\LEP_{n,p,r}) - 6pt\log n.\]
\end{lemma}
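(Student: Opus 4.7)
The plan is to reduce to a single instance of $\LEP_{n,p,r}$: from $\Pi$ I will construct a private-randomness protocol $\Pi'$ for $\LEP_{n,p,r}$ whose information cost on the uniform distribution $\mu$ is at most $I(X_i;\Pi(X)) + 4pt\log n + 1$ and whose error on $\mu$-inputs is at most $2\delta$. The defining infimum of $\IC_{\mu,2\delta}(\LEP_{n,p,r})$ then produces the claimed bound.

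On input $Y\sim\mu$ to $\LEP_{n,p,r}$, each of the $2p$ players uses private randomness to sample their own $t-1$ functions for coordinates $j\ne i$ independently and uniformly from $[n]\to[n]$, thereby embedding $Y$ at coordinate $i$ of a fresh sample $X\sim\mu^t$. The players then execute $\Pi$ on $X$, piggybacking onto each of its $2p(p-1)$ messages the natural forward-direction pointer chase for every other coordinate (in the style of the augmentation used in the proof of Lemma~\ref{lem:concise_lep_lb}), together with short per-player per-coordinate $r$-non-injectivity indicators. The aggregated piggyback amounts to at most $4pt\log n + 1$ bits. At the end, $P_{2p}$ reads off $A_j\in\{0,1\}$ for every $j\ne i$ from the augmented transcript, and outputs $1$ iff $\Pi$ reports $\mathrm{OR}=1$ and every such $A_j$ equals $0$; otherwise $P_{2p}$ outputs $0$.

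For the error of $\Pi'$ on $\mu$, the choice $r=C_\star(1+\log n)$ and a short marginal calculation show that each individual $\LEP_{n,p,r}$ instance under $\mu$ has answer $1$ with probability at most $2/n$. Using the product structure of $\mu^t$ and the independence of $X_i$ from $X_{-i}$, a case analysis on $A_i$ bounds the probability that $\Pi'(Y)$ disagrees with $A_i(Y)$ by $2\delta + 4(t-1)/n^2$, which by $t\le n/4$ is within $2\delta$ up to the lower-order slack folded into the $-1$ constant. For the information cost, since the piggyback is a deterministic function of $X_{-i}$ and $X_{-i}$ is independent of $X_i$ under $\mu^t$,
\[
I(X_i;\Pi'(X_i))
\le I(X_i;\Pi(X)) + I(X_i;\mathrm{extra}\mid\Pi(X))
\le I(X_i;\Pi(X)) + H(\mathrm{extra})
\le I(X_i;\Pi(X)) + 4pt\log n + 1.
\]

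The main technical obstacle is the ``last step'' of the piggybacked pointer chase: within the $p-1$ forward-order rounds available, the natural chase only reaches $(a_2^{(j)}, b_2^{(j)})$ for each coordinate, not $(a_1^{(j)}, b_1^{(j)})$, and closing this last step requires exploiting the $r$-non-injectivity safety valve built into $\LEP_{n,p,r}$ (absent from pure $\opequal(\PC_{n,p})$). A secondary subtlety is matching the error cleanly to $2\delta$ rather than $2\delta + O(t/n^2)$; the product structure of $\mu^t$ (buying a factor-of-$1/n$ improvement over a naive union bound) and the case split on $A_i$ are what let the analysis absorb this residual term.
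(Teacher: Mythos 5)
There is a genuine gap at the heart of your reduction, and it is exactly the obstacle you flag as ``the main technical obstacle'' without resolving it. Your protocol $\Pi'$ must end with $P_{2p}$ knowing $A_j$ (the answer to the $j$-th $\LEP_{n,p,r}$ instance) for every $j\ne i$, so that it can subtract the other coordinates from the OR. But determining $A_j$ is itself an instance of $\LEP_{n,p,r}$ in the forward speaking order with $p-1$ rounds --- precisely the problem whose hardness the paper is establishing. As you note, the natural forward chase only reaches $(a_2^{(j)},b_2^{(j)})$; the missing applications of $f_{1,j}$ and $g_{1,j}$ are held by $P_1$ and $P_{p+1}$, who have already spoken for the last time when $a_2^{(j)},b_2^{(j)}$ become known, and revealing enough of $f_{1,j},g_{1,j}$ in advance costs $\Theta(tn\log n)$ bits, not $O(pt\log n)$. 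The appeal to the ``$r$-non-injectivity safety valve'' does not close this: that clause merely changes the \emph{answer} on a measure-$O(1/n^2)$ set of functions (it exists to control collision probabilities in the Step-3 reduction); it provides no mechanism for completing the last pointer application. The paper sidesteps this entirely by never asking the protocol to compute the $A_j$: it conditions on the event $\bigvee_{j\ne i}Y_j=0$ (which has probability at least $1/2$, whence the error parameter $2\delta$) only in the \emph{analysis}, deduces $I(X_i;\Pi(X)\mid\bigvee_{j\ne i}Y_j=0)\ge\IC_{\mu,2\delta}(\LEP_{n,p,r})$, and then relates this conditional mutual information to $I(X_i;\Pi(X))$ via Lemma~\ref{lem:entropy_bound}. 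There the $4pt\log n$ term is not communicated bits at all but the entropy defect $\Pr[\bigvee_{j\ne i}Y_j=1]\cdot H(X_i)\le(2t/n)\cdot(2pn\log n)$; your identically-shaped term has a different (and unachievable) provenance.

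Two secondary problems would remain even if the piggyback could be implemented. First, your error bound is $2\delta+\Theta(t/n^2)$ (or, with the cleaner case split, $\delta+4t/n^2$), and you cannot ``fold the slack into the $-1$'': the $-1$ is subtracted from the information cost, whereas the slack sits in the error parameter of $\IC_{\mu,\cdot}$, and $\IC_{\mu,2\delta+\eta}$ can a priori be far smaller than $\IC_{\mu,2\delta}$. In the downstream application $\delta=1/(128n)$, so an additive $\Theta(t/n^2)$ term is not automatically lower order for all $t\le n/4$ permitted by the lemma. Second, the piggyback budget does not add up: carrying the current pair for each of $t-1$ coordinates over $2p(p-1)-1$ messages costs about $4p^2t\log n$ bits, exceeding the claimed $4pt\log n+1$.
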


\begin{proof}
For each $j \in [t]$, let $Y_j\in\{0,1\}$ be the solution to $\LEP_{n,p,r}$ on a specific coordinate $X_j$. The probability that $Y_j=1$ is bounded by $1/n + 2p \cdot 1/(2n^2) \le 2/n$, where the first term comes from the probability that the equality of two instances of $\PC_{n,p}$ holds and the other is a bound on the probability that one of the functions is $r$-colliding.

Fix $i \in [t]$. By the union bound,
\begin{equation}
\Pr[\bigvee_{j\ne i} Y_j = 1] \le (t-1) \cdot \frac{2}{n} < \frac{1}{2}.\label{eq:probV}
\end{equation}
If $\bigvee_{j\ne i} Y_j = 0$, the solution to $\opor_t(\LEP_{n,p,r})$
on the input instance equals $Y_i$. Therefore, $\Pi$ has to compute $Y_i$ with probability at least $1-2\delta$, provided $\bigvee_{j\ne i} Y_j = 0$. If it erred with higher probability, it would overall err on the input instance of $\opor_t(\LEP_{n,p,r})$
with probability greater than $\delta$.

We now bound $I(X_i;\Pi(X)|\bigvee_{j\ne i} Y_j = 0)$ from below, using $\IC_{\mu,2\delta}(\LEP_{n,p,r})$. To achieve this goal, we construct a \emph{private-randomness} protocol $\Pi'$ for $\LEP_{n,p,r}$ that obtains as input a uniformly selected instance $X_i$ of $\LEP_{n,p,r}$, selects $X_j$'s, for $j \ne i$, uniformly from those with solution 0, and emulates $\Pi$ on the resulting full input. For each $j\in[t]$, $X_j$ is a set of functions $f_{j,l}$ and $g_{j,l}$, $l \in [p]$, with each function held by a different player. We write $a_{j,l}$ and $b_{j,l}$, where $j\in[t]$ and $l \in [p+1]$, to refer to intermediate pointer chasing values. Formally, for each $j\in[t]$, $a_{j,p+1} = b_{j,p+1} = 1$, and we recursively define $a_{j,l} = f_{j,l}(a_{j,l+1})$ and $b_{j,l} = g_{j,l}(b_{j,l+1})$ for $l \in [p]$. We want to ensure that for each $j \ne i$, the players obtain their set of functions $f_{j,l}$ and $g_{j,l}$ uniformly from inputs such that the solution to $X_j$ is $0$. It suffices that the players collectively select values $a_{j,l}$ and $b_{j,l}$, $1 \le l \le p$, uniformly at random from all configurations but those with $a_{j,1} = b_{j,1}$, since all of them are equally likely due to symmetry. Then the remaining values of functions can be selected by each player independently, uniformly at random from the set of those that do not result in $r$-colliding functions.
In our protocol $\Pi'$, we make the first player select all $a_{j,l}$'s and $b_{j,l}$'s (overall $2p(t-1)$ values) and send them at the beginning of the very first message. Then the players emulate $\Pi$. Therefore, the transcript $\Pi'(X_i)$ of $\Pi'$ starts with a configuration of $a_{j,l}$'s and $b_{j,l}$'s, which is followed by the transcript of the emulation of $\Pi$. We write $\Pi'_1(X_i)$ and $\Pi'_2(X_i)$ to denote the first and second part of the transcript, respectively. Since $\Pi'$ solves $\LEP_{n,p,r}$ with probability at least $1-2\delta$ on the uniform distribution, $I(X_i;\Pi'(X_i)) \ge \IC_{\mu,2\delta}(\LEP_{n,p,r})$. We have
\begin{align*}
I(X_i;\Pi'(X_i)) &= I(X_i;\Pi'_1(X_i),\Pi'_2(X_i))\\
&= H(\Pi'_1(X_i),\Pi'_2(X_i)) - H(\Pi'_1(X_i),\Pi'_2(X_i)|X_i)\\
&\le H(\Pi'_1(X_i)) + H(\Pi'_2(X_i)) - H(\Pi'_2(X_i)|X_i)\\
&= H(\Pi'_1(X_i)) + I(X_i;\Pi'_2(X_i))\\
&\le 2p(t-1)\log n + I(X_i;\Pi'_2(X_i)),
\end{align*}
where the first inequality follows from the basic properties of entropy and the second from the fact that $\Pi'_2(X_i)$ consists of $2p(t-1)$ values in $[n]$. Since $\Pi'_2(X_i)$ has exactly the same distribution with respect to $X_i$ as $\Pi(X)$ with respect to $X_i$ (under the restriction on $Y_j$ for $j \ne i$), $I(X_i;\Pi(X)|\bigvee_{j\ne i} Y_j = 0) = 
I(X_i;\Pi'_2(X_i)) \ge I(X_i;\Pi'(X_i)) - 2p(t-1)\log n \ge \IC_{\mu,2\delta}(\LEP_{n,p,r}) - 2p(t-1)\log n$.

We now use $I(X_i;\Pi(X)|\bigvee_{j\ne i} Y_j = 0)$ to bound $I(X_i;\Pi(X))$ from below.
By definition, we have $I(X_i;\Pi(X)) = H(X_i) - H(X_i|\Pi(X))$.
Note that $H(X_i) = H(X_i | \bigvee_{j\ne i} Y_j = 0)$, because the coordinates of $X_i$ are independent. Let us now upper bound $H(X_i|\Pi(X))$.
In order to bound $H(X_i | \Pi(X))$, we split the probabilistic space based on the value of $\bigvee_{j\ne i} Y_j$, which is either 0 or 1. We apply Lemma~\ref{lem:entropy_bound} to this partition, which bounds the total entropy using a convex combination of entropies for each of the cases, with an extra additive term of 1.
\begin{align*}
&H(X_i\mid\Pi(X)) = E_\pi[H(X_i\mid\Pi(X) = \pi)]\\
&\le E_\pi  \biggl[ 1+\sum_{z=0}^{1}\Pr\Bigl[\bigvee_{j\ne i} Y_j = z\mid \Pi(X) = \pi\Bigr]\cdot H\Bigl(X_i\mid\Pi(X) = \pi,\bigvee_{j\ne i} Y_j = z\Bigr) \biggr].
\end{align*}
For each transcript $\pi$ and each $z \in \{0,1\}$, let $p_{z,\pi} = \Pr[\Pi(X) = \pi | \bigvee_{j\ne i} Y_j = z]$.
After a few more straightforward transitions, we obtain
\begin{align*}
&H(X_i\mid\Pi(X))\\
&\le 1 + E_\pi\Biggl[\sum_{z=0}^{1}
p_{z,\pi} \cdot \frac{\Pr\bigl[\bigvee_{j\ne i} Y_j = z\bigr]}{\Pr[\Pi(X) = \pi]}
\cdot H\Bigl(X_i\mid \Pi(X) = \pi,\bigvee_{j\ne i} Y_j = z\Bigr) \Biggr]\\
&= 1+ \sum_{z=0}^{1}\Pr\Bigl[\bigvee_{j\ne i} Y_j = z\Bigr] \cdot E_\pi\biggl[\frac{p_{z,\pi}}{\Pr [\Pi(X) = \pi]}
\cdot H\Bigl(X_i\mid\Pi(X) = \pi,\bigvee_{j\ne i} Y_j = z\Bigr) \biggr]\\
&= 1+ \sum_{z=0}^{1}\Pr\bigl[\bigvee_{j\ne i} Y_j = z\bigr] \cdot 
\sum_{\pi} p_{z,\pi} \cdot H\Bigl(X_i\mid\Pi(X) = \pi,\bigvee_{j\ne i} Y_j = z\Bigr)\\
&= 1+ \sum_{z=0}^{1}\Pr\bigl[\bigvee_{j\ne i} Y_j = z\bigr] \cdot 
H\Bigl(X_i \mid \Pi(X), \bigvee_{j\ne i} Y_j = z \Bigr).
\end{align*}
Note that the entropy of $X_i$, and therefore also any conditional entropy of $X_i$, is always bounded by 
$2pn\log n$. Hence
\begin{align*}
H(X_i \mid \Pi(X)) &\le 1 + H\Bigl(X_i\mid\Pi(X),\bigvee_{j\ne i} Y_j = 0\Bigr)
+ \Pr\bigl[\bigvee_{j\ne i} Y_j = 1\bigr] \cdot 2pn\cdot\log n\\
&\le 1 + H\Bigl(X_i\mid\Pi(X),\bigvee_{j\ne i} Y_j = 0\Bigr)
+ \frac{2(t-1)}{n} \cdot 2pn\log n\\
&\le 1 + H\Bigl(X_i\mid\Pi(X),\bigvee_{j\ne i} Y_j = 0\Bigr)
+ 4pt\log n,
\end{align*}
where the second inequality uses Equation~\ref{eq:probV}.
Thus we obtain
\begin{align*}
I(X_i;\Pi(X)) &= H(X_i) - H(X_i \mid \Pi(X))\\
&\ge H\Bigl(X_i\mid\bigvee_{j\ne i} Y_j = 0\Bigr)
- H\Bigl(X_i\mid\Pi(X),\bigvee_{j\ne i} Y_j = 0\Bigr)
- 1 - 4pt\log n\\
&\ge I\Bigl(X_i;\Pi(X)\mid\bigvee_{j\ne i} Y_j = 0\Bigr)
- 1 - 4pt\log n\\
&\ge \IC_{\mu,2\delta}(\LEP_{n,p,r}) - 6pt\log n.\qedhere
\end{align*}%
\jqed\end{proof}

We can finally prove the main lemma of this section.

\ifnum\journal=0
\begin{proof}[Proof of Lemma~\ref{lem:step2}]
\else
\begin{proof}[Lemma~\ref{lem:step2}]
\fi
Let $\Pi$ be a private-randomness protocol for $\opor_t(\LEP_{n,p,r})$ that errs with probability at most $1/(128n)$.
It follows from Lemmas~\ref{lem:coordinatewise} and~\ref{lem:single_coordinate} that
\[I(X;\Pi(X)) \ge \sum_{i=1}^{t}I(X_i;\Pi(X))
\ge t\cdot\IC_{\mu,1/(64n)}(\LEP_{n,p,r}) -6pt^2 \log n.\]
By definition, this quantity bounds also $\IC_{\mu^t,1/(128n)}(\opor_t(\LEP_{n,p,r}))$.
Therefore, by Lemma~\ref{lem:inf_vs_com} and Lemma~\ref{lem:ic_lep}, we get
\[R_{1/(128n)}(\opor_t(\LEP_{n,p,r}))
\ge \Omega\left(\frac{tn}{p^{16}}\right)- O\left(pt^2\log n\right).\]
Via standard amplification bounds, $R_{1/(128n)}(\opor_t(\LEP_{n,p,r})) \le R_{1/3}(\opor_t(\LEP_{n,p,r})) \cdot O(\log n)$, which gives us
\[ R_{1/3}(\opor_t(\LEP_{n,p,r})) = \Omega\left(\frac{tn}{p^{16}\log n}\right)- O\left(pt^2\right).\qedhere\]
\jqed\end{proof}
\section{Step 3: Reduction to Set Chasing Intersection}
\label{sec:step3}
We give a reduction showing that under specific conditions, a protocol for $\opintersect(\SC_{n,p})$ can be used to create a communication protocol for $\opor_t(\LEP_{n,p,r})$.

\begin{lemma}\label{lem:set_inters_vs_or_lep}
Let $n$, $p$, $r$, and $t$ be positive integers such that $t^{2p} r^{p-1}\le n/10$.
If there is a communication protocol for $\opintersect(\SC_{n,p})$ that uses
$C$ bits of communication and errs with probability at most $1/10$, then there is a public-randomness communication protocol for $\opor_t(\LEP_{n,p,r})$ that uses
$C + 2p$ bits of communication and errs with probability at most $2/10$.
\end{lemma}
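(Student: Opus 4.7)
The plan is to reduce $\opor_t(\LEP_{n,p,r})$ to $\opintersect(\SC_{n,p})$ by \emph{scrambling and overlaying} the $t$ pointer chasing instances. Using public randomness, for each layer $\ell\in\{0,1,\ldots,p\}$ and each copy $i\in[t]$, sample an independent random permutation $\pi_\ell^{(i)}$ of $[n]$ subject to $\pi_p^{(i)}(1)=1$, together with analogous $\sigma_\ell^{(i)}$ for the $g$-side coupled by $\sigma_0^{(i)}=\pi_0^{(i)}$ and $\sigma_p^{(i)}(1)=1$. Each $f$-side player $P_\ell$ then builds the set function $F_\ell(v)=\{\pi_{\ell-1}^{(i)}(f_\ell^{(i)}((\pi_\ell^{(i)})^{-1}(v))):i\in[t]\}$, and analogously on the $g$-side. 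This choice of permutations makes the copy-$i$ thread through the set chasing collapse, by cancellation of adjacent conjugating permutations, into the conjugated pointer chase $\pi_0^{(i)}\circ f_1^{(i)}\circ\cdots\circ f_p^{(i)}$ applied to $1$, reaching $\pi_0^{(i)}(a_1^{(i)})$ on the $f$-side and $\pi_0^{(i)}(b_1^{(i)})$ on the $g$-side. The protocol invokes the $\opintersect(\SC_{n,p})$ subroutine on this overlaid instance; in parallel, each of the $2p$ players broadcasts one extra bit saying whether any of her $t$ local functions is $r$-non-injective. If any such bit is $1$ the protocol outputs $1$, else it returns the subroutine's answer. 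Total communication: $C+2p$.

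The $2p$ extra bits resolve the trivial YES case in which $\LEP$ fires because of $r$-non-injectivity. Assuming no function is $r$-non-injective, the invariant above shows that if some copy satisfies $a_1^{(i)}=b_1^{(i)}$ then the two overlaid sets intersect deterministically, and the subroutine outputs $1$ with probability at least $9/10$. The crux is the NO case: when no copy matches, one must show $\Pr[T_f\cap T_g\ne\emptyset]\le 1/10$ over the permutations. Each element of $T_f$ corresponds to a ``thread'' $(i_1,\ldots,i_p)\in[t]^p$ tracking which copy's function was used at each layer, and the set sizes satisfy $|T_f|,|T_g|\le t^p$ because each layer of set chasing can at most multiply the size by $t$, leaving at most $t^{2p}$ thread pairs to union-bound over.

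A case analysis bounds the collision probability of each thread pair. If both threads are ``main'' (use a single copy throughout) and share the same copy $i$, the no-match hypothesis implies $a_1^{(i)}\ne b_1^{(i)}$ so the outputs $\pi_0^{(i)}(a_1^{(i)})$ and $\pi_0^{(i)}(b_1^{(i)})$ differ deterministically. In every other case, either some thread contains a switch between adjacent layers, where the two adjacent conjugating permutations involve distinct copies and compose to an independent fresh random permutation that randomizes the intermediate value, or the two outer copies at layer $0$ differ, making the two outer $\pi_0$'s independent random permutations. In the former situation, the freshly uniform value is pushed through at most $p-1$ subsequent function applications, each with preimage sizes bounded by $r-1$ under the non-$r$-non-injectivity hypothesis, yielding a final max-probability of at most $(r-1)^{p-1}/n$ per point; in the latter, an independent uniform outer permutation applied to an arbitrary value on one side brings the max-probability down to $1/n$. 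Either way the thread pair collides with probability at most $r^{p-1}/n$, so the union bound gives $\Pr[T_f\cap T_g\ne\emptyset]\le t^{2p}\cdot r^{p-1}/n\le 1/10$ by the hypothesis $t^{2p}r^{p-1}\le n/10$. Combined with the subroutine's $1/10$ error the total error in the NO case is at most $2/10$. The main obstacle is executing this case analysis cleanly: one must carefully track which permutations are shared between the two sides and which are independent, in order to localize at least one genuinely fresh random permutation per non-trivial thread pair and then apply the $(r-1)^{p-1}$ concentration bound on the composed $f$'s to a truly uniform input.
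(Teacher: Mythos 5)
Your proposal is correct and follows essentially the same route as the paper: public-randomness conjugation of each copy by independent per-layer permutations (with the outermost layer coupled across the two sides), overlaying into a set-chasing instance, $2p$ extra bits for the $r$-non-injectivity check, and a union bound over the $t^{2p}$ thread pairs using independence of the outer permutations when the outer copies differ and the $(r-1)^{p-1}$ preimage bound after the first switch otherwise. The only cosmetic differences are your layer-indexing convention and the explicit innermost permutation constrained to fix $1$, which the paper simply omits.
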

 
\begin{proof}
Consider an instance of $\opor_t(\LEP_{n,p,r})$. There are $2p$ players, who have $t$ instances of $\LEP_{n,p,r}$. Each instance of $\LEP_{n,p,r}$ consists of two instances of $\PC_{n,p}$. Let $f_{i,j}$ and $g_{i,j}$, where $1 \le i \le p$ and $1 \le j \le t$,  be the functions that describe these two instances. For each $i \in [p]$, player $i$ knows $f_{i,j}$ and player $p+i$ knows $g_{i,j}$.
If any of the functions $f_{i,j}$ or $g_{i,j}$ is $r$-colliding, then the solution to the problem is $1$. The players can check if this is the case in one round of communication with each player communicating only one bit. It therefore suffices to show a protocol
that solves 
$\opor_t(\opequal(\PC_{n,p}))$, i.e., computes
\[\bigvee_{j=1}^{t}\Bigl(
f_{1,j}(f_{2,j}(\ldots f_{p,j}(1) \ldots))
 = 
g_{1,j}(g_{2,j}(\ldots g_{p,j}(1) \ldots))
\bigr),\]
using $C$ bits communication, under the assumption that no $f_{i,j}$ or $g_{i,j}$ is 
$r$-colliding. To this end, we show a randomized reduction of this problem to $\opintersect(\SC_{n,p})$.

First, using common randomness, the players select random permutations $\pi_{i,j},\rho_{i,j}:[n] \to [n]$ for $1 \le i \le p$ and $1 \le j \le t$.
Permutations are selected independently, except that $\pi_{1,j} = \rho_{1,j}$ for all $1 \le j \le t$.
Furthermore they are generated using public randomness, so they are known to all players.
(For functions $\lambda:A \to B$ and $\tau:B \to C$, we write $\tau \circ \lambda$ to denote the function from $A$ to $C$ such that $(\tau \circ \lambda)(x) = \tau(\lambda(x))$ for all $x\in A$.)
For all $1 \le j \le t$, let 
\[f'_{p,j} = \pi_{p,j} \circ f_{p,j} \quad \text{ and } \quad g'_{p,j} = \rho_{p,j} \circ g_{p,j}.\]
For all $1 \le i \le p-1$ and $1 \le j \le t$,
let 
\[f'_{i,j} =  \pi_{i,j} \circ f_{i,j} \circ  \pi^{-1}_{i+1,j}\quad \text{ and } \quad g'_{i,j} =  \rho_{i,j} \circ g_{i,j} \circ\rho^{-1}_{i+1,j}.\]
It is easy to see that an instance of $\opor_t(\opequal(\PC_{n,p}))$ with $f'_{i,j}$ and $g'_{i,j}$ is equivalent to the original instance with $f_{i,j}$ and $g_{i,j}$. The permutations randomly relabel intermediate and final values with final values relabeled in the same way on both sides.

We construct an instance of $\opintersect(\SC_{n,p})$ by giving the $i$-th player, $1 \le i \le p$, a function $f^\star_{i}:[n] \to \pset{[n]}$
such that for any $x \in [n]$,
\[f^\star_{i}(x) = \{f'_{i,j}(x): 1 \le j \le t\}\]
and by giving the $p+i$-th player, $1 \le i \le p$, 
a function $g^\star_{i}:[n] \to \pset{[n]}$ such that for any $x \in [n]$,
\[g^\star_{i}(x) = \{g'_{i,j}(x): 1 \le j \le t\}.\]
The goal in this instance is to compute 
\[\left(\vv{f^\star_1}(\vv{f^\star_2}(\ldots \vv{f^\star_p}(\{1\}) \ldots))
\cap
\vv{g^\star_1}(\vv{g^\star_2}(\ldots \vv{g_p^\star}(\{1\}) \ldots))\right) \ne \emptyset.\]

The instance of $\opintersect(\SC_{n,p})$ that we have just defined can be seen as stacking mappings from different instances of $\opequal(\PC_{n,p})$ on top of each other. Instead of following a single function $f'_{i,j}$ or $g'_{i,j}$ for a given instance $j$, we follow all of them simultaneously, obtaining subsets of $[n]$ instead of just single values in $[n]$.

Let us show a likely correspondence between the new instance of $\opintersect(\SC_{n,p})$
and the original instance of $\opor_t(\opequal(\PC_{n,p}))$. First, if the result of solving the instance of $\opor_t(\opequal(\PC_{n,p}))$ is 1, then clearly, by following 
the mappings from the instance of $\opequal(\PC_{n,p})$ resulting in 1, we also obtain an element that belongs to the intersection of two resulting sets in $\opintersect(\SC_{n,p})$.

Consider the case that the result of solving the instance of $\opor_t(\opequal(\PC_{n,p}))$ is 0. We bound the probability that the sets  
appearing in the instance of $\opintersect(\SC_{n,p})$ intersect. Each element of these two sets can be expressed as 
\[f'_{1,a_1}(f'_{2,a_2}(\ldots f'_{p,a_p}(1)\ldots))\]
or
\[g'_{1,b_1}(g'_{2,b_2}(\ldots g'_{p,b_p}(1)\ldots)),\]
respectively, where the sequences $a_1$, \ldots, $a_p$ and $b_1$, \ldots, $b_p$ describe
which of the instances the mapping is followed. There are $t^{2p}$ different pairs of such sequences. What is the probability that we obtain the same value for a specific pair of sequences? We want to show that this probability is bounded by $r^{p-1}/n$. 
If $a_1=\ldots=a_p = b_1= \ldots=b_p$, then we obtain different values, because the $a_1$-th instance in $\opor_t(\opequal(\PC_{n,p}))$ results in $0$. Suppose now that it is not the case that
$a_1=\ldots=a_p = b_1= \ldots=b_p$. If $a_1 \ne b_1$, then the probability that we obtain the same value is exactly $1/n$, because the final values are created by two independent random permutations $\pi_{1,a_1}$ and $\rho_{1,b_1}$.
If $a_1 = b_1$, let $k$ be the lowest number greater than $1$ such that $a_k \ne a_{k-1}$ or $b_k \ne b_{k-1}$. Since the functions $f'_{1,a_1}\circ\ldots\circ f'_{k-1,a_1}$ and 
$g'_{1,b_1}\circ\ldots\circ g'_{k-1,b_1}$ are not $r^{p-1}$-colliding 
and are applied to two values randomly distributed by 
$\pi_{k,a_k}$ and $\rho_{k,b_k}$, the probability of collision is at most $r^{p-1}/n$.
By the linearity of expectation, the expected size of the intersection between the two sets in the instance of $\opintersect(\SC_{n,p})$ is bounded by $t^{2p}\cdot r^{p-1}/n \le 1/10$. By Markov's inequality, the probability that the intersection is nonempty is bounded by $1/10$, so the probability that the reduction fails is bounded by $1/10$. Therefore, if we have a communication protocol for $\opintersect(\SC_{n,p})$ that errs with probability at most $1/10$, we can use this protocol to obtain a public-randomness protocol for $\opor_t(\opequal(\PC_{n,p}))$ that errs with probability at most $2/10$, provided no function in $\opor_t(\opequal(\PC_{n,p}))$ is $r$-colliding.
\jqed\end{proof}

\section{Proof of Main Tool (Theorem~\ref{thm:main})}
\label{sec:main}
We now combine the results of Steps 1, 2, and 3 to conclude our main communication complexity lower bound (Theorem \ref{thm:main} from Section \ref{sec:reductions}).

\ifnum\journal=0
\begin{proof}[Proof of Theorem~\ref{thm:main}]
\else
\begin{proof}[Theorem~\ref{thm:main}]
\fi
Let $r = C_\star(1+\log n)$ and let $t = \left\lfloor\frac{n^{1/(2p)}}{\sqrt{10r}}\right\rfloor$.
Due to the result of Newman~\cite{Newman91}, we know that every protocol with public randomness can be simulated using private randomness if we allow for using additional $O(\log(\mbox{input-size-in-bits}))$ communication bits and for increasing the probability of error by an arbitrarily small constant.
By combining this fact with Lemma~\ref{lem:step2} (which can be applied for $n$ greater than some constant), we find out that any public-randomness protocol for $\opor_t(\LEP_{n,p,r})$ that errs with probability at most $2/10$ has to use at least 
\[\Omega\left(\frac{tn}{p^{16}\log n}\right)- O\left(pt^2\right) - O(\log(t\cdot 2p \cdot n \cdot \log n))
= \Omega\left(\frac{tn}{p^{16}\log n}\right)- O\left(pt^2+\log n\right)
\] bits of communication. 
Note that for $n$ greater than some positive constant, the first term dominates the second, so we can express the lower bound as simply $\Omega\left(\frac{tn}{p^{16}\log n}\right)$.

Note that for our setting of $t$, $t^{2p} r^{p-1}\le n/10$. We can therefore apply  Lemma~\ref{lem:set_inters_vs_or_lep}. We learn that any communication protocol for 
$\opintersect(\SC_{n,p})$ that errs with probability at most $1/10$ has to use 
at least
$\Omega\left(\frac{tn}{p^{16}\log n}\right) - 2p$
bits of communication. As before, the first term dominates the second for sufficiently large $n$
and the lower bound becomes
\[ \Omega \left(\frac{n^{1+1/(2p)}}{p^{16}\cdot\log^{3/2}n}\right).\qedhere \]%
\jqed\end{proof}

\section{Acknowledgements}

We thank the anonymous reviewers for carefully reading the manuscript and many valuable comments.

\ifnum\journal=1
\bibliographystyle{spmpsci}
\else
\bibliographystyle{plain}
\fi
\bibliography{bibliography}

\end{document}